\newtheorem{theorem}{Theorem}[section]
\newtheorem{lemma}[theorem]{Lemma}
\newtheorem{proposition}[theorem]{Proposition}
\newtheorem{corollary}[theorem]{Corollary}
\newtheorem{example}{Example}
\newenvironment{proof}[1][Proof]{\begin{trivlist}
\item[\hskip \labelsep {\bfseries #1}]}{\end{trivlist}}
\newenvironment{definition}[1][Definition]{\begin{trivlist}
\item[\hskip \labelsep {\bfseries #1}]}{\end{trivlist}}
\newenvironment{remark}[1][Remark]{\begin{trivlist}
\item[\hskip \labelsep {\bfseries #1}]}{\end{trivlist}}
\newcommand{\Rmnum}[1]{\expandafter\@slowromancap\romannumeral #1@}
\begin{document}
%
\title{Complementary Dual Algebraic Geometry Codes}

\author{ Sihem Mesnager$^1$ \and Chunming Tang$^2$ \and  Yanfeng Qi$^3$
\thanks{This work was supported by the SECODE European Project (Paris 8) and by
the National Natural Science Foundation of China
(Grant No. 11401480, 11531002). C. Tang
also acknowledges support from 14E013 and
CXTD2014-4 of China West Normal University.
Y. Qi also acknowledges support from Zhejiang provincial Natural Science Foundation of China (LQ17A010008).
}
\thanks{S. Mesnager is with Department of Mathematics, Universities of Paris VIII and XIII and Telecom ParisTech, LAGA, UMR 7539, CNRS, Sorbonne Paris Cit\'{e}. e-mail: smesnager@univ-paris8.fr}
\thanks{C. Tang is with School of Mathematics and Information, China West Normal University, Nanchong, Sichuan,  637002, China. e-mail: tangchunmingmath@163.com
}

\thanks{Y. Qi is with School of Science, Hangzhou Dianzi University, Hangzhou, Zhejiang, 310018, China.
e-mail: qiyanfeng07@163.com
}

}

%


\maketitle

\begin{abstract}

Linear complementary dual (LCD) codes  is a class of linear codes  introduced by Massey in 1964.
LCD codes have been extensively studied in literature recently. In addition to their applications in data storage, communications systems, and consumer electronics,  LCD codes have been employed in cryptography.
More specifically,  it has been shown
that LCD codes can also help improve the security of the information processed by sensitive devices, especially against so-called side-channel attacks (SCA) and fault non-invasive attacks. In this paper, we are  interested in the construction  of particular algebraic geometry (AG) LCD  codes which could be  good candidates to be resistant against SCA. We firstly provide a construction scheme  for obtaining LCD codes from elliptic curves. Then, some explicit LCD codes from elliptic curve are presented.
MDS codes are of the most importance in coding theory due to their theoretical significance and practical interests. In this paper,  all  the constructed LCD codes from elliptic curves are MDS or almost MDS. Some infinite  classes of LCD codes from elliptic curves are optimal due to the Griesmer bound. Finally, we introduce a  construction mechanism for obtaining LCD codes from any algebraic curve and derive some explicit LCD codes from hyperelliptic curves and Hermitian curves.
\end{abstract}

\begin{IEEEkeywords}
Linear complementary dual codes, algebraic geometry codes, algebraic curves,  elliptic curves, non-special divisors
\end{IEEEkeywords}

%
\IEEEpeerreviewmaketitle

\section{Introduction}
Linear complementary dual (LCD) cyclic codes over finite fields were first introduced and studied by Massey \cite{Mas} in
1964. In the literature LCD cyclic codes were referred to as reversible cyclic codes.
 It is well-known that LCD codes are asymptotically
good. Furthermore, using the full dimension spectra of linear
codes, Sendrier showed that LCD codes meet the asymptotic Gilbert-Varshamov bound \cite{Sen}.
Afterwards, LCD codes have been extensively studied in literature. In particular many  properties  and constructions of LCD codes have been obtained. Yang and Massey have provided in \cite{YM} a necessary and sufficient condition under which
a cyclic code  have a complementary dual.  Dougherty et al.  have
developed in \cite{DKO} a linear programming bound on the largest size of a LCD code of given length
and minimum distance. Esmaeili and Yari analyzed LCD codes that are quasi-cyclic
\cite{EY}.  Muttoo and Lal constructed a reversible code over $\mathbb F_q$ \cite{ML}. Tzeng and
Hartmann proved that the minimum distance of a class of reversible cyclic codes is greater
than the BCH bound \cite{TH}. In \cite{LDL16} Li et al.  studied a class of reversible BCH codes proposed in \cite{LDL} and extended the results on their parameters. As a byproduct, the parameters of some primitive BCH codes have been analyzed. Some of the obtained codes are optimal or have the best known parameters. In \cite{CG} Carlet and Guilley investigated an application of LCD codes
against side-channel attacks, and presented several constructions of LCD codes. In \cite{DLL}, Ding et al. constructed several families of LCD
cyclic codes over finite fields and analyzed their parameters.


Let  $K=\mathbb{F}_{q}$ be a finite field of order $q$ and $C/K$ be a  smooth projective curve of genus $g$.
  We denote by $D$ a
divisor  over $C/K$ : $D:=P_1+ \ldots+ P_n$, where $P_i (i=1, \cdots, n)$ are   pairwise  different places of degree one. Let
 $G$ be a divisor of  $C/K$ such that ${\mbox {supp }} D \cap {\mbox {supp }} G=\varnothing$. Let  $\cal {C} := {\cal
{C}}$$( D, G)$ be the associate  algebraic geometry (AG) code with the divisors  $D$ and $G$ defined as
\begin{equation*}
\mathcal {C}(D,G)=\{
(f(P_1),\ldots, f(P_n)),
f\in \mathcal{L}(G)\},
\end{equation*}
where ${\mathcal
{L}}(G)=\{f\in K(C), (f) \succeq -G\}\cup\{0\}$.
The code $\cal {C}$ is the image of ${\mathcal {L}}(G)$ under the evaluation map $ev_{D}$ given by
\begin{eqnarray*}
ev_{D} \, : \,\mathcal {L}(G) &\longrightarrow& {\mathbb {F}^n_{q}} \\
f & \longmapsto& (f(P_1), \ldots, f(P_n)).
\end{eqnarray*}

An algebraic geometry (AG) code $\mathcal{C}(D, G)$ associating with divisors $G$ and $D$ over the projective line is said to be \emph{rational}. In particular BCH codes and Goppa codes can be described by means of rational AG codes. All the generalized Reed-Solomon codes and extended generalized Reed-Solomon codes can be defined  under the framework of AG codes.

Recently,  it has been shown
that codes can also help improve the security of the information processed by sensitive devices, especially against  the so-called \emph{side-channel attacks} (SCA) and fault non-invasive attacks.

In this paper, we are  interested in the construction  of particular AG complementary dual (LCD) codes which can be resistant against SCA.
We firstly provide a construction scheme  for obtaining complementary dual codes from elliptic curves (Theorem \ref{gconstruction}). Then, some explicit complementary dual code are presented.
All  the constructed LCD codes from elliptic curve are MDS or almost MDS. Moreover, they contain some infinite  class of optimal codes meeting Griesmer bound on linear codes. Finally, we introduce a  construction mechanism for obtaining LCD codes from any algebraic
curve (Theorem \ref{gconstruction for curves} and Theorem \ref{Res:const}) and give some explicit LCD codes from hyperelliptic curve and Hermitian curves.
The constructed LCD codes presented in this paper could be good candidates of codes resistant against SCA, that is, codes having the property of being complementary dual codes with high minimal distance closed the Singleton bound.

This paper is organized as follows: In Section \ref{prel} we introduce the notations used in this paper and recall some basic facts about algebraic geometry codes. In section \ref{gene}, we present a general construction of
LCD codes from elliptic curves. In Section \ref{expl}, some explicit LCD codes from elliptic curves are derived. In Section \ref{comp}, we introduce a construction
mechanism for obtaining LCD codes from any algebraic curves  and give some explicit LCD codes from hyperelliptic curve and Hermitian curves.
\section{preliminaries}\label{prel}
In this section, we introduce notations and results  on
LCD codes,
algebraic geometry codes, and elliptic curves.
\subsection{Complementary dual codes and optimal codes}
A linear code of length $n$ over $\mathbb F_q$ is a linear subspace of $\mathbb F_q^n$. There is a canonical non-degenerate bilinear form on $\mathbb F_q^n \times \mathbb F_q^n$, defined by
$$ <(a_1, \cdots, a_n), (b_1, \cdots, b_n)>=a_1b_1+\cdots +a_nb_n.$$
For a linear code $\mathcal C$ of length $n$, the code
$$ \mathcal C^ \bot =\{\mathbf v \in \mathbb F_q^n: <\mathbf v, \mathbf c>=0 \text{ for any } \mathbf c \in \mathcal C\}$$
is called the dual of $\mathcal C$. The code  $\mathcal C^\bot $ is linear, and we have
$$dim_{\mathbb F_q}(\mathcal C)+ dim_{\mathbb F_q}(\mathcal C ^\bot)=n. $$
A linear code $\mathcal C$ is said to be a \emph{linear complementary dual (LCD) code} if the intersection with its dual $\mathcal C^\bot$ is trivial, that is, $\mathcal C \cap \mathcal C^\bot =\{0\}$.
The weight $wt(\mathbf v)$ of a vector $\mathbf v\in \mathbb F_q^n$ is the number of its nonzero coordinates. The minimum Hamming distance $d$ of a linear code $\mathcal C \neq \{0\}$ is defined by
$$d=\min \{wt(\mathbf c): \mathbf c \in \mathcal C\}.$$
An [n, k, d] linear code $\mathcal C$ is  a linear code of length $n$, dimension $k$ and minimum Hamming distance $d$.
We shall use the following codes.
\begin{definition}
Let $\mathbf a=(a_1,\cdots,a_n)$ with $a_i\in \mathbb F_{q}^\star $ and $\mathcal C \subseteq\mathbb{F}_q^n$. Then
\begin{align*}
\mathbf a\cdot \mathcal C:=\{(a_1c_1,\cdots,a_nc_n):(c_1,\cdots,c_n)\in \mathcal C\}.
\end{align*}
\end{definition}
Obviously, $\mathbf a \cdot \mathcal C$ is a linear code if and only if $\mathcal C$ is a linear code. These codes have the same dimension, minimum Hamming distance and weight distribution.

Let $n_q(k, d):= \min \{n: \text{ there is an } [n, k, d]~ \text{ linear code over } \mathbb F_q\}$. The $[n_q(k, d), k, d]$ codes
are called \emph{optimal codes}. The following result is known as the \emph{Griesmer bound} (see \cite{Dod} or \cite{HT}).
\begin{align}\label{Griesmer}
n_q(k,d)\ge g_q(k,d):=\sum_{i=0}^{k-1} \lceil \frac{d}{q^i} \rceil.
\end{align}

Under certain conditions on $d$ and $q$, it can be shown that $n_q(k, d)= g_q(k, d)$ (see \cite{Dod}
and \cite{HD}) for further references. Any $[g_q(k, d), k, d]$ code is optimal. Obviously,
$$n_q(k, d)\ge g_q(k, d)\ge  k + d-1.$$
The inequality $n_q(k, d)\ge  k + d -1$ is known as the \emph{Singleton bound}. If $d > q$, then
the Singleton bound is  always worse than the Greismer bound. The $[n, k, d]$ codes with
$n = k + d -1$ (resp. $n = k + d$) are called \emph{maximum distance separable (MDS) codes} (resp. \emph{almost maximum distance separable (MDS) codes}).
\subsection{Generalized algebraic geometry codes}
Let $C$ be a smooth projective curve of genus $g$. Throughout this paper, we assume $P_1,P_2,\cdots, P_n$ are pairwise different places   of $C$ of degree one and denote by $D$ the divisor $P_1+P_2+\cdots +P_n$.
We fix some notations which will be used throughout this paper.
\begin{itemize}
\item $\mathbb F_q$ denotes the finite field with $q=p^m$ elements;
\item $\mathrm{Tr}^m_1(x)=\sum_{i=0}^{m-1} x^{p^i}$ denotes the trace function from $\mathbb F_q$ to $\mathbb F_p$;
\item $C$ denotes a smooth projective curve over $\mathbb F_q$;
\item $\mathbb F_q(C)$ denotes the function field of $C$;
\item $C(\mathbb F_q)$ denotes the set of $\mathbb F_q$-rational points of  $C$;
\item $\Omega$ denotes the module of differentials of $C$;
\item $(f)$ denotes the principal divisor of $0\neq f\in \mathbb F_q(C)$;
\item $(\omega)$ denotes the divisor of differential $0\neq \omega\in \Omega$;
\item $v_P$ denotes the valuation of $\mathbb F_q(C)$ at the place $P$;
\item $\mathrm{Res}_P(\omega)$ denotes the residue of $\omega$ at $P$;
\item $G$ denotes a divisor of $C$ over $\mathbb F_q$;
\item $Supp(G)$ denotes the set of places in the support of $G$;
\item $\mathcal L(G):=\{f\in \mathbb F_q(C):(f)\succeq -G\} \cup \{0\}$;
\item $\Omega(G):=\{\omega\in \Omega:(\omega)\succeq G\}\cup \{0\}$;
\item $l(G)$ denotes the dimension of $\mathcal L(G)$ over $\mathbb F_q$;
\item $i(G)$ denotes the dimension of $\Omega(G)$ over $\mathbb F_q$.
\end{itemize}
Two divisor $D_1$ and $D_2$ are called equivalent, if there is a function $f\in \mathbb F_q(C)$ with $(f)=D_1-D_2$. Denote
two equivalent divisors $D_1$ and $D_2$ by
$D_1\sim D_2$.
The following famous result \cite{Sti}, known as the Riemann-Roch theorem is not only a central result  in algebraic geometry with applications in other areas, but it is also the key of several results in coding theory.
\begin{theorem}\label{Riemann-Roch}
Let $G$ be a divisor on a smooth projective curve of genus $g$ over $\mathbb F_q$. Then, for any Weil differential $0\neq \omega\in \Omega$
\begin{align*}
l(G) -i(G)=deg(G)+1-g \text{~~~~and~~~~} i(G)=l((\omega)-G).
\end{align*}
\end{theorem}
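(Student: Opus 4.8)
The statement is the classical Riemann--Roch theorem, so the plan is to sketch the adele-theoretic route (following \cite{Sti}) that meshes with the Weil-differential notation already fixed above. First I would write $F=\mathbb F_q(C)$ and introduce the ring of adeles $\mathbb A$, whose elements are families $(\alpha_P)_P$ indexed by the places of $F$ with each $\alpha_P\in F$ and $v_P(\alpha_P)\ge 0$ for all but finitely many $P$. Each divisor $G$ then cuts out the $\mathbb F_q$-subspace $\mathbb A(G)=\{(\alpha_P):v_P(\alpha_P)\ge -v_P(G)\ \text{for all}\ P\}$, and the diagonal embedding $F\hookrightarrow\mathbb A$ realizes $\mathcal L(G)=\mathbb A(G)\cap F$. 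The first task will be to show that the quotient $\mathbb A/(\mathbb A(G)+F)$ is finite-dimensional; I would then \emph{define} $i(G):=\dim_{\mathbb F_q}\mathbb A/(\mathbb A(G)+F)$ and reconcile this with $\dim_{\mathbb F_q}\Omega(G)$ only at the final step.

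To obtain the first identity I plan to argue by a dimension count. For $G_1\preceq G_2$ the quotient $\mathbb A(G_2)/\mathbb A(G_1)$ is purely local, giving $\dim_{\mathbb F_q}\mathbb A(G_2)/\mathbb A(G_1)=deg(G_2)-deg(G_1)$, while $\mathcal L(G_2)/\mathcal L(G_1)$ embeds inside it. Comparing these two filtrations shows that $l(G)-i(G)-deg(G)$ does not depend on $G$; evaluating at the zero divisor, where $l(0)=1$ and $i(0)=g$, would pin the constant to $1-g$, which is exactly $l(G)-i(G)=deg(G)+1-g$. The same comparison yields Riemann's inequality $l(G)\ge deg(G)+1-g$ and the vanishing $i(G)=0$ once $deg(G)$ is large enough.

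The second identity is the duality statement, and this is where the real work lies. I would define a Weil differential as an $\mathbb F_q$-linear functional $\omega\colon\mathbb A\to\mathbb F_q$ annihilating $\mathbb A(G)+F$ for some $G$; the functionals killing a fixed $\mathbb A(G)+F$ form exactly the dual of $\mathbb A/(\mathbb A(G)+F)$, hence a space of dimension $i(G)$, which I would identify with $\Omega(G)$. Equipping the module of all Weil differentials with the $F$-action $(f\omega)(\alpha)=\omega(f\alpha)$, the crux of the argument is to prove that this module is one-dimensional over $F$. Granting that, I would fix $0\neq\omega$ with its canonical divisor $(\omega)$, characterized by $\omega\in\Omega(W)\iff W\preceq(\omega)$, and observe that $f\omega\in\Omega(G)$ precisely when $(f)+(\omega)\succeq G$, i.e.\ when $f\in\mathcal L((\omega)-G)$. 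Thus $f\mapsto f\omega$ gives an $\mathbb F_q$-isomorphism $\mathcal L((\omega)-G)\xrightarrow{\ \sim\ }\Omega(G)$, whence $i(G)=l((\omega)-G)$.

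The principal obstacle is establishing the one-dimensionality of the space of Weil differentials over $F$: once that is in hand the existence of the canonical divisor, the residue pairing $\mathrm{Res}_P$, and the closing isomorphism are all formal, whereas the one-dimensionality itself rests on the finiteness and approximation estimates for the adele quotients and a careful comparison of $i(G)$ across a cofinal family of divisors. An alternative I would keep in reserve is the sheaf-cohomological formulation, interpreting $l(G)=\dim H^0$, $i(G)=\dim H^1$, deducing the Euler-characteristic identity from additivity and the base case, and recovering $i(G)=l((\omega)-G)$ from Serre duality; this trades the adele bookkeeping for the finiteness of coherent cohomology, but it relocates rather than removes the essential difficulty.
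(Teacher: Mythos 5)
The paper gives no proof of this statement: it is quoted as the classical Riemann--Roch theorem with a pointer to Stichtenoth's book \cite{Sti}, so there is nothing internal to compare against. Your sketch faithfully reproduces the standard adele-theoretic argument of that reference --- finiteness of $\mathbb A/(\mathbb A(G)+F)$, constancy of $l(G)-i(G)-deg(G)$ pinned down at $G=0$, identification of $\Omega(G)$ with the dual of the adelic quotient, one-dimensionality of the module of Weil differentials over $F$, and the isomorphism $f\mapsto f\omega$ onto $\mathcal L((\omega)-G)$ --- and you correctly flag the one-dimensionality (together with the existence of the canonical divisor) as the essential step that your outline does not carry out.
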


We call $i(G)$  the index of speciality of $G$. A divisor $G$ is called \emph{non-special} if $i(G)=0$ and otherwise it is called \emph{special}.
Note that $g-1$ is the least possible degree of a divisor of $G$ to be non-special, since  $0\le l(G)=deg(G)-g+1$. Moreover, if $deg(G)=g-1$, then $G$ is a non-special divisor if and only if $l(G)=0$. A non-special divisor of degree $g-1$ is never effective.

Let $G=\sum_{i=1}^n m_i P_i$ and $H=\sum_{i=1}^n m_i' P_i$ be two divisors. Then, we call $\sum_{i=1}^n \min(m_i,m_i')P_i$ the greatest common divisors denoted by $\mathrm{g.c.d}(G,H)$. Such a divisor is supported on the places common to the support of both divisors with coefficients the
minimum of those occurring in $G$ and $H$.
 We call $\sum_{i=1}^n \max(m_i,m_i')P_i$ the least multiple divisor denoted by $\mathrm{l.m.d}(G,H)$. Such a divisor is supported
on all the places in the supports of $G$ and $H$ with coefficients the maximum of
those occurring in the divisors $G$ and $H$.

For a divisor $G$ of $C$ with $v_{P_i}(G)=0(i=1,\cdots,n)$ and $2g-2<deg(G)<n$, and a vector $\mathbf a=(a_1,a_2,\cdots,a_n)$ with $a_i \in \mathbb F_q^\star $, we define a \emph{generalized algebraic geometry code}
\begin{align}
\mathcal {GC}(D,G,\mathbf a):=\{(a_1f(P_1),\cdots, a_nf(P_n)):f \in \mathcal L(G)\}.
\end{align}
If $\mathbf{a}=(1,1,\cdots,1)$, then,  $\mathcal{GC}(D,G,\mathbf a)$ is a classical algebraic geometry code denoted by $\mathcal{C}(D,G)$.
If $C$ is a curves of genus $1$ (called \emph{elliptic curves}), $\mathcal {GC}(D,G,\mathbf a)$ (resp. $\mathcal{C}(D,G)$) is called \emph{generalized elliptic code (resp. elliptic code)}.

Let $\omega $ be a \emph{Weil differential} such that $v_{P_i}(\omega)=-1$ for $i\in\{1,\cdots,n\}$. Then $\mathcal{C}(D, G)^{\bot}=\mathbf e \cdot \mathcal {C}(D, H)$ with
$H:=D-G+(\omega)$ and $\mathbf e=(\mathrm{Res}_{P_1}(\omega),
\cdots,\mathrm{Res}_{P_n}(\omega))$.  Thus,
\begin{align}\label{dual}
\mathcal{GC}(D, G,\mathbf a)^{\bot}=(\mathbf a^{-1}  \ast   \mathbf e) \cdot \mathcal {C}(D, H),
\end{align}
where $\mathbf a^{-1}  \ast   \mathbf e=(\frac{\mathrm{Res}_{P_i}(\omega)}{a_1},\cdots,\frac{\mathrm{Res}_{P_n}(\omega)}{a_n})$.\\
The following theorem determines the parameters of $\mathcal{GC}(D, G,\mathbf a)$ \cite{Sti}.
\begin{theorem}\label{[n,k,d]}
The code $\mathcal{GC}(D, G,\mathbf a)$ has dimension $k=deg(G)-g+1$ and minimum distance $d\ge n-deg(G)$.
\end{theorem}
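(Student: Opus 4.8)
The plan is to reduce immediately to the classical algebraic geometry code and then invoke the Riemann-Roch theorem. Since every coordinate $a_i$ lies in $\mathbb F_q^\star$, componentwise multiplication by $\mathbf a$ is a linear automorphism of $\mathbb F_q^n$; hence $\mathcal{GC}(D,G,\mathbf a)=\mathbf a\cdot \mathcal C(D,G)$ has exactly the same dimension, minimum distance and weight distribution as $\mathcal C(D,G)$, as already observed after the definition of $\mathbf a\cdot \mathcal C$. It therefore suffices to establish $k=\deg(G)-g+1$ and $d\ge n-\deg(G)$ for the evaluation code $\mathcal C(D,G)=ev_D(\mathcal L(G))$.

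For the dimension, I would analyze the linear map $ev_D\colon \mathcal L(G)\to \mathbb F_q^n$. A function $f\in\mathcal L(G)$ lies in the kernel exactly when $f(P_i)=0$ for every $i$, i.e. when $(f)\succeq -G+D$, which is to say $f\in\mathcal L(G-D)$. Because $\deg(G)<n$ we have $\deg(G-D)=\deg(G)-n<0$, and a divisor of negative degree supports no nonzero function (any nonzero $f$ has $\deg((f))=0$), so $\mathcal L(G-D)=\{0\}$ and $ev_D$ is injective. Consequently $k=l(G)$. Now the hypothesis $\deg(G)>2g-2$ forces $G$ to be non-special: by Theorem \ref{Riemann-Roch}, $i(G)=l((\omega)-G)$ while $\deg((\omega)-G)=2g-2-\deg(G)<0$, so $i(G)=0$ and $l(G)=\deg(G)+1-g$, giving $k=\deg(G)-g+1$.

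For the minimum distance, I would take a nonzero codeword $ev_D(f)$, so $f\in\mathcal L(G)$ is nonzero, and let $w$ be its weight. If $T\subseteq\{1,\dots,n\}$ is the set of indices where the codeword vanishes, then $|T|=n-w$ and $f$ vanishes at each $P_i$ with $i\in T$, so $f\in\mathcal L\bigl(G-\sum_{i\in T}P_i\bigr)$. Since $f\ne 0$ this Riemann-Roch space is nonzero, which again by the negative-degree vanishing forces $\deg\bigl(G-\sum_{i\in T}P_i\bigr)\ge 0$, that is $\deg(G)-(n-w)\ge 0$, hence $w\ge n-\deg(G)$. As this holds for every nonzero codeword, $d\ge n-\deg(G)$.

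The argument is essentially routine once the framework is in place; the only points requiring care are the roles of the two inequalities in the hypothesis. The upper bound $\deg(G)<n$ is exactly what guarantees injectivity of the evaluation map, so that $k=l(G)$ rather than a strict inequality, while the lower bound $\deg(G)>2g-2$ is exactly what makes $G$ non-special, so that $l(G)$ is computed with no index-of-speciality correction term. The main thing to check carefully is the repeated use of the elementary fact that $l(E)=0$ whenever $\deg(E)<0$, since it underlies both the dimension count and the distance bound; no genuine obstacle arises beyond bookkeeping of degrees.
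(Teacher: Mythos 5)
Your proof is correct and is exactly the standard argument from the cited reference \cite{Sti}: the paper itself gives no proof of this theorem, only the citation, and your reduction to $\mathcal C(D,G)$ followed by the kernel computation $\ker(ev_D)=\mathcal L(G-D)=\{0\}$, the non-speciality of $G$ from $\deg(G)>2g-2$, and the degree argument for the distance bound is precisely how it is established there. The bookkeeping of the two hypotheses $2g-2<\deg(G)<n$ (which the paper imposes in its definition of $\mathcal{GC}(D,G,\mathbf a)$) is handled correctly.
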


From the definition of the generalized algebraic geometry codes we see that curves carrying many rational points may produce long codes.
 On the other hand, the number of $\mathbb F_q$-rational points of a smooth curve $C$ defined over $\mathbb F_q$ is bounded by the well known
 Hasse-Weil bound:
 $$|\#(C(\mathbb F_q))-(q+1)|\le 2 g \sqrt{q},$$
where $g$ is the geometric genus of $C$. As a consequence, curves attaining the bound (which are called \emph{maximal}) are particularly interesting in coding theory.
\subsection{Elliptic curves}
Let $E$ be an elliptic curve over $\mathbb{F}_q$ and $\mathcal O$ be the point at infinity of $E(\overline{\mathbb{F}}_q)$, where $\overline{\mathbb{F}}_q$
is the algebraic closure of $\mathbb{F}_q$ and $E(\overline{\mathbb{F}}_q)$ is the set of all points on $E$.  For any divisor $D\in Div(E)$, we denote $\overline{D}$ the unique rational point such that
$D- \overline D-(deg(D)-1) \mathcal O$ is a principal divisor. In fact, if $D=m_1P_1+m_2 P_2+\cdots + m_n P_n$, then $\overline D=m_1 P_1 \oplus m_2 P_2\oplus \cdots \oplus m_n P_n$, where $\oplus$ is the addition of points on the elliptic curve. For non-negative integer $r$, let $E[r]:=\{P\in E(\overline{\mathbb{F}}_q): \underbrace{P\oplus \cdots \oplus P}_{r}=\mathcal O\}$. We refer to \cite{Sil} for more details about elliptic curves.

\section{general constructions of LCD codes from elliptic curves}\label{gene}
In this section, we consider
the construction of LCD codes from
elliptic curves and determine the parameters of these LCD codes. We first present  a  proposition, which will be used in the following paper.
\begin{proposition}\label{a.C}
Let $a_i\in \mathbb F_q^\star$ $(i=1,\cdots,n)$ and $\mathcal C$ be a linear code in $\mathbb F_q^n$. If $\mathcal C ^ \bot= \mathbf e \cdot \mathcal C'$ with $\mathbf e=(a_1^2,\cdots,a_n^2)$ and $\mathcal C \cap \mathcal C'= \{0\}$, then, $(\mathbf a \cdot \mathcal C)^ \bot=\mathbf a \cdot  \mathcal C'$ and $\mathbf a \cdot \mathcal C$ is complementary dual, where $\mathbf a =(a_1,\cdots,a_n)$.
\end{proposition}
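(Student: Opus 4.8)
The plan is to treat componentwise multiplication by $\mathbf{a}$ as a linear automorphism of $\mathbb{F}_q^n$ and to track how it interacts with the bilinear form. Throughout I write $\mathbf{a}\cdot\mathbf{x}:=(a_1x_1,\dots,a_nx_n)$ for a single vector $\mathbf{x}$; since each $a_i\in\mathbb{F}_q^\star$, the map $\mathbf{x}\mapsto\mathbf{a}\cdot\mathbf{x}$ is a bijection, it preserves dimensions, and it commutes with intersections of subspaces. These three elementary facts are what the whole argument rests on.

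First I would establish the dual identity $(\mathbf{a}\cdot\mathcal{C})^\bot=\mathbf{a}\cdot\mathcal{C}'$. The computational heart is the identity
$$\langle \mathbf{a}\cdot\mathbf{c}',\,\mathbf{a}\cdot\mathbf{c}\rangle=\sum_{i=1}^n a_i^2 c_i' c_i=\langle \mathbf{e}\cdot\mathbf{c}',\,\mathbf{c}\rangle,$$
which merely regroups the factors $a_i\cdot a_i=a_i^2=e_i$. Taking $\mathbf{c}'\in\mathcal{C}'$, the vector $\mathbf{e}\cdot\mathbf{c}'$ lies in $\mathbf{e}\cdot\mathcal{C}'=\mathcal{C}^\bot$ by hypothesis, so the right-hand side vanishes for every $\mathbf{c}\in\mathcal{C}$; hence $\mathbf{a}\cdot\mathbf{c}'\in(\mathbf{a}\cdot\mathcal{C})^\bot$, which gives the inclusion $\mathbf{a}\cdot\mathcal{C}'\subseteq(\mathbf{a}\cdot\mathcal{C})^\bot$.

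To upgrade this inclusion to an equality I would compare dimensions rather than chase the reverse containment. Since scaling by $\mathbf{a}$ (resp. by $\mathbf{e}$) is a bijection, $\dim(\mathbf{a}\cdot\mathcal{C}')=\dim\mathcal{C}'=\dim(\mathbf{e}\cdot\mathcal{C}')=\dim\mathcal{C}^\bot=n-\dim\mathcal{C}$, while $\dim(\mathbf{a}\cdot\mathcal{C})^\bot=n-\dim(\mathbf{a}\cdot\mathcal{C})=n-\dim\mathcal{C}$ as well; the two spaces have equal dimension, so the inclusion forces $(\mathbf{a}\cdot\mathcal{C})^\bot=\mathbf{a}\cdot\mathcal{C}'$.

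Finally, for the LCD conclusion I would simply intersect: using the dual identity just proved together with the fact that multiplication by the invertible $\mathbf{a}$ commutes with intersection,
$$(\mathbf{a}\cdot\mathcal{C})\cap(\mathbf{a}\cdot\mathcal{C})^\bot=(\mathbf{a}\cdot\mathcal{C})\cap(\mathbf{a}\cdot\mathcal{C}')=\mathbf{a}\cdot(\mathcal{C}\cap\mathcal{C}')=\mathbf{a}\cdot\{0\}=\{0\},$$
which is exactly the statement that $\mathbf{a}\cdot\mathcal{C}$ is complementary dual. There is no genuine obstacle here: the only point demanding care is the bookkeeping in the bilinear-form identity, where one must notice that the two copies of $\mathbf{a}$ arising from scaling both codes combine into the prescribed weight vector $\mathbf{e}=(a_1^2,\dots,a_n^2)$ — this is precisely why the hypothesis is phrased with squares. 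An alternative to the dimension count, namely the direct substitution $\mathbf{u}=\mathbf{a}\cdot\mathbf{v}$ into the orthogonality condition, yields the reverse inclusion just as quickly and avoids invoking dimensions at all.
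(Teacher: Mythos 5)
Your proof is correct and follows essentially the same route as the paper's: the same bilinear-form identity $\langle \mathbf a\cdot\mathbf c',\,\mathbf a\cdot\mathbf c\rangle=\langle \mathbf e\cdot\mathbf c',\,\mathbf c\rangle$ together with the dimension count $\dim\mathcal C+\dim\mathcal C'=n$ yields $(\mathbf a\cdot\mathcal C)^\bot=\mathbf a\cdot\mathcal C'$, and the LCD property then follows from $\mathcal C\cap\mathcal C'=\{0\}$. The only cosmetic difference is that the paper checks triviality of the intersection by unwinding a generic element of $(\mathbf a\cdot\mathcal C)\cap(\mathbf a\cdot\mathcal C)^\bot$, whereas you note that scaling by the invertible $\mathbf a$ commutes with intersections.
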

\begin{proof}
From $\mathcal C ^ \bot= \mathbf e \cdot \mathcal C'$, one has $dim_{\mathbb F_q}(\mathcal C)+  dim_{\mathbb F_q}(\mathcal C')=n$, and, for any $(c_1,\cdots,c_n)\in \mathcal C$ and $(c_1',\cdots,c_n')\in \mathcal C'$,
$a_1c_1\cdot a_1c_1'+\cdots +a_nc_n\cdot a_nc_n'=0$. Thus, $(\mathbf a \cdot \mathcal C)^ \bot=\mathbf a \cdot  \mathcal C'$.

Suppose $(a_1\overline{c}_1,\cdots, a_n\overline{c}_n) \in (\mathbf a \cdot \mathcal C)^ \bot \cap \mathbf a \cdot \mathcal C$, where $(\overline{c}_1,\cdots, \overline{c}_n) \in\mathcal C$. Then, for any $(c_1,\cdots,c_n) \in \mathcal C$, $a_1c_1\cdot a_1\overline{c}_1+\cdots +a_nc_n\cdot a_n\overline{c}_n=0$. Thus, $(a_1^2\overline{c}_1,\cdots, a_n^2 \overline{c}_n)\in \mathcal C^ \bot$ and $(\overline{c}_1,\cdots,\overline{c}_n)\in \mathcal C'$. From  $\mathcal C \cap \mathcal C'= \{0\}$, we obtain $(\overline{c}_1,\cdots,\overline{c}_n)=(0,\cdots,0)$. Hence, $\mathbf a \cdot \mathcal C$ is complementary dual. This completes the proof.
\end{proof}

The following theorem constructs the LCD codes from elliptic curves and determines
the corresponding parameters of these codes.
\begin{theorem}\label{gconstruction}
Let $E$ be an elliptic curve over $\mathbb{F}_q$ and  $G$, $D=P_1+P_2+\cdots +P_n$ be two divisors over $E$, where   $0<deg(G)<n$.  Let $\omega$ be a Weil differential such that $(w)=G+H-D$ for some divisor $H$ and $Supp(G)\cap Supp(D)=Supp(H)\cap Supp(D)=\emptyset$.     Assume that

{\rm (1)} There is a vector $\mathbf a=(a_1,\cdots,a_n)\in (\mathbb{F}_q^\star)^n$ with $\mathrm{Res}_{P_i}(\omega)=a_i^2$;

{\rm (2)} $deg(\mathrm{g.c.d}(G,H))=0$;

{\rm (3)} $\mathrm{g.c.d}(G,H)$  is not a principal divisor.

Then, $\mathcal{ GC}(D,G,\mathbf a)$ is a LCD code with dimension $deg(G)$ and minimum distance $d\ge n-deg(G)$,  and  the dual code $\mathcal{ GC}(D,H,\mathbf a)$ of $\mathcal{ GC}(D,G,\mathbf a)$  is a LCD code with dimension $n-deg(G)$ and minimum distance $d^\bot \ge deg(G)$.
\end{theorem}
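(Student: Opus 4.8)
The plan is to reduce everything to Proposition~\ref{a.C} by taking $\mathcal C=\mathcal C(D,G)$ and $\mathcal C'=\mathcal C(D,H)$, so that $\mathbf a\cdot\mathcal C=\mathcal{GC}(D,G,\mathbf a)$ and $\mathbf a\cdot\mathcal C'=\mathcal{GC}(D,H,\mathbf a)$. The duality hypothesis of that proposition is handed to us almost for free: since $(\omega)=G+H-D$ gives $H=D-G+(\omega)$, the dual formula \eqref{dual} yields $\mathcal C(D,G)^\bot=\mathbf e\cdot\mathcal C(D,H)$ with $\mathbf e=(\mathrm{Res}_{P_1}(\omega),\dots,\mathrm{Res}_{P_n}(\omega))$, and hypothesis~(1) identifies this exactly as $\mathbf e=(a_1^2,\dots,a_n^2)$. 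Thus the only genuine thing left to verify before invoking Proposition~\ref{a.C} is the intersection condition $\mathcal C(D,G)\cap\mathcal C(D,H)=\{0\}$; once that is in place the proposition immediately gives that $\mathcal{GC}(D,G,\mathbf a)$ is LCD with dual $\mathcal{GC}(D,H,\mathbf a)$, and a one-line remark that $(\mathcal C^\bot)^\bot=\mathcal C$ shows the dual code is LCD as well.

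The heart of the argument, and the step I expect to be the main obstacle, is proving $\mathcal C(D,G)\cap\mathcal C(D,H)=\{0\}$, since this is precisely where hypotheses~(2) and~(3) must be spent. I would take a codeword in the intersection and write it as $ev_D(f)=ev_D(g)$ with $f\in\mathcal L(G)$ and $g\in\mathcal L(H)$. The difference $h=f-g$ then vanishes at every $P_i$ while its poles are confined to $\mathrm{Supp}(G)\cup\mathrm{Supp}(H)$, so $(h)\succeq D-\mathrm{l.m.d}(G,H)$, i.e. $h\in\mathcal L(\mathrm{l.m.d}(G,H)-D)$. Here the two hypotheses enter: using $\mathrm{l.m.d}(G,H)=G+H-\mathrm{g.c.d}(G,H)$ together with $G+H-D=(\omega)$ and the fact that the canonical divisor on a genus-$1$ curve is principal, one obtains $\mathrm{l.m.d}(G,H)-D\sim-\mathrm{g.c.d}(G,H)$. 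By~(2) this divisor has degree $0$, and by~(3) it is not principal; on an elliptic curve a degree-$0$ divisor has a nonzero section only when it is principal, so $\mathcal L(\mathrm{l.m.d}(G,H)-D)=\{0\}$ and hence $f=g$. Then $f\in\mathcal L(G)\cap\mathcal L(H)=\mathcal L(\mathrm{g.c.d}(G,H))$, which is again a degree-$0$ non-principal divisor, forcing $f=0$ and the codeword to vanish.

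Finally the parameters follow directly from Theorem~\ref{[n,k,d]} with $g=1$: the code $\mathcal{GC}(D,G,\mathbf a)$ has dimension $\deg(G)-g+1=\deg(G)$ and distance $d\ge n-\deg(G)$, while $\deg(H)=\deg((\omega))+n-\deg(G)=n-\deg(G)$ (as $\deg((\omega))=2g-2=0$) gives $\mathcal{GC}(D,H,\mathbf a)$ dimension $n-\deg(G)$ and distance $d^\bot\ge n-\deg(H)=\deg(G)$. The delicate points to get right are essentially bookkeeping ones: checking that the supports of $G$ and $H$ stay disjoint from $D$ so that the pole/zero estimates for $h$ are clean, and correctly invoking the genus-$1$ dichotomy ``degree $0$ with a nonzero section $\Leftrightarrow$ principal'', which is exactly what makes conditions~(2) and~(3) both indispensable.
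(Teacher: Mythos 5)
Your proposal is correct and follows essentially the same route as the paper's proof: the triviality of $\mathcal C(D,G)\cap\mathcal C(D,H)$ is established by sending $h=f-g$ into $\mathcal L(\mathrm{l.m.d}(G,H)-D)$ and $f$ into $\mathcal L(\mathrm{g.c.d}(G,H))$, both of which vanish because a degree-$0$ non-principal divisor on an elliptic curve has no nonzero sections (the canonical class being trivial), and the conclusion is then drawn from Equation (\ref{dual}), Proposition \ref{a.C} and Theorem \ref{[n,k,d]}. The only cosmetic difference is that you deduce $h=0$ first and then treat $f$, whereas the paper splits into the two cases $h=0$ and $h\neq 0$; the substance is identical.
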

\begin{proof}
Note that a canonical divisor over an  elliptic curve is a principal divisor. Since  $\mathrm{g.c.d}(G,H)$  is not a principal divisor and $\mathrm{l.m.d}(G,H)-D=(G+H-D)-
\mathrm{g.c.d}(G,H)$, then $\mathrm{l.m.d}(G,H)-D$ is not a principal divisor.

We firstly prove that $\mathcal{C}(D,G) \cap \mathcal{C}(D,H)=\{0\}$. Suppose that there are some $f\in \mathcal{L}(G)$ and some
$g\in \mathcal{L}(H)$ such that $f(P_i)=g(P_i)$ for $i=\{1,2,\cdots ,n\}$. Consider the following two mutually exclusive cases on
$h:=f-g$

{\rm 1)} Case $h=0$. One has $f\in \mathcal{L}(G) \cap \mathcal{L}(H)$.  Then, $f\in \mathcal{L}(\mathrm{g.c.d(G,H)})$. Since $\mathrm{g.c.d(G,H)}$
is not a principal divisor, then $f\in \mathcal{L}(\mathrm{g.c.d(G,H)})=\{0\}$ and  $f=g=0$.

{\rm 2)} Case $h\neq 0$. One has $h \in \mathcal{L}(\mathrm{l.m.d}(G,H)-D)$ as $h(P_i)=0 (i=1,\cdots,n)$. Since $\mathrm{l.m.d(G,H)}-D$
is not a principal divisor, then $h\in \mathcal{L}(\mathrm{l.m.d(G,H)}-D)=\{0\}$  and  $h=0$, which is a contradiction.

Hence, $\mathcal{C}(D,G) \cap \mathcal{C}(D,H)=\{0\}$. From Equation (\ref{dual}), $\mathcal{C}(D,G)^\bot =\mathbf e \cdot \mathcal{C}(D,H)$ with  $\mathbf{e}=(\mathrm{Res}_{P_1}(\omega),\cdots, \mathrm{Res}_{P_n}(\omega))$,  and Proposition \ref{a.C}, $\mathcal{ GC}(D,G,\mathbf a)$ and $\mathcal{ GC}(D,H,\mathbf a)$  are complementary dual codes with $\mathcal{ GC}(D,G,\mathbf a)^\bot =\mathcal{ GC}(D,H,\mathbf a)$. The dimensions and minimum distances follow from Theorem \ref{[n,k,d]}.
\end{proof}

\begin{remark}\label{optimal}
An interesting result of Cheng \cite{Che} says that the minimum distance problem is already $\mathbf{NP}$-hard (under $\mathbf{RP}$-reduction) for general elliptic curves codes.
In \cite{LWZ}, Li et al.  showed that the minimum distance of algebraic codes from elliptic curves also has a simple explicit formula if the evaluation set is suitably large (at least $\frac{2}{3}$ of the group order). This method proves that, if $n=\#D \geq q+2$ and $3<deg(G)<q-1$, then, $\mathcal{GC}(D,G,\mathbf a)$ has the deterministic minimum distance $n-deg(G)$. In  this cases, $\mathcal{GC}(D,G,\mathbf a)$ has parameters $[n, deg(G), n-deg(G)]$, where $n= \# D$.
Thus, $\mathcal{GC}(D,G,\mathbf a)$ is an almost MDS code. Let $n\ge q+3$. From \cite{Jan},  if $2 \le deg(G)\le n-(q+1)$ or $q+1 \le deg(G)\le n-2$,
the elliptic code $\mathcal{GC}(D,G,\mathbf a)$ is  optimal.  Hence, many different (perhaps  nonequivalent) LCD generalized elliptic optimal codes exist, since many elliptic curves with more than $q+2$ rational points exist.
\end{remark}

From Theorem \ref{gconstruction}, we have the following two corollaries.
\begin{corollary}\label{(r-1)O+r Q}
Let $n$, $r$ be positive integers with $2\le r \le \frac{n+1}{2}$ and $D=P_1+\cdots +P_n$ be a divisor  such that $ \overline{D}\not \in E[r-1]$
and $\mathcal O,\overline D\not \in Supp(D)$. Let $G=(r-1)\mathcal O+r \overline D$ and $H=(n-r)\mathcal O-(r-1)\overline D$. Let $\omega$ be the Weil  differential such that
$(\omega)=(n-1) \mathcal{O}+\overline D-D$. Assume that there is a vector $\mathbf a=(a_1,\cdots,a_n)\in (\mathbb{F}_q^\star)^n$ with $\mathrm{Res}_{P_i}(\omega)=a_i^2$. Then, $\mathcal{ GC}(D,G,\mathbf a)$ is a LCD code with dimension $2r-1$ and minimum distance $d\ge n-2r+1$,  and  the dual code $\mathcal{ GC}(D,H,\mathbf a)$ of $\mathcal{ GC}(D,G,\mathbf a)$  is a LCD code with dimension $n-2r+1$ and minimum distance $d^\bot \ge 2r-1$.
\end{corollary}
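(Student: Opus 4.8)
The plan is to read this corollary as the specialization of Theorem \ref{gconstruction} to the divisors $G = (r-1)\mathcal{O} + r\overline{D}$ and $H = (n-r)\mathcal{O} - (r-1)\overline{D}$, which have been engineered precisely so that hypotheses (1)--(3) of that theorem hold. Since hypothesis (1) is granted outright as an assumption of the corollary, the entire task reduces to verifying the ambient divisor relation $(\omega) = G + H - D$ together with conditions (2) and (3); once these are in place, Theorem \ref{gconstruction} delivers the stated parameters verbatim.

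First I would confirm the divisor bookkeeping. Summing coefficientwise gives $G + H = (n-1)\mathcal{O} + \overline{D}$, whence $G + H - D = (n-1)\mathcal{O} + \overline{D} - D = (\omega)$, matching the prescribed Weil differential. I would also record that such an $\omega$ genuinely exists and has $v_{P_i}(\omega) = -1$ at each evaluation place: by the defining property of $\overline{D}$ the divisor $D - \overline{D} - (n-1)\mathcal{O}$ is principal, so its negative $(n-1)\mathcal{O} + \overline{D} - D$ is principal, hence canonical on the elliptic curve, and its coefficient at each $P_i$ is $-1$ because $\mathcal{O}, \overline{D} \notin Supp(D)$. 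The support-disjointness requirements and the degree hypothesis of Theorem \ref{gconstruction} then follow directly from $\mathcal{O}, \overline{D} \notin Supp(D)$ and $2 \le r \le \frac{n+1}{2}$, since $deg(G) = 2r-1$.

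The core computation is $\mathrm{g.c.d}(G, H)$. Because $\mathcal{O} \in E[r-1]$ while $\overline{D} \notin E[r-1]$, the two points $\mathcal{O}$ and $\overline{D}$ are distinct, so the greatest common divisor is formed coefficientwise:
$$\mathrm{g.c.d}(G, H) = \min(r-1, n-r)\,\mathcal{O} + \min(r, 1-r)\,\overline{D}.$$
The inequality $r \le \frac{n+1}{2}$ is exactly what forces $r - 1 \le n - r$, so the coefficient at $\mathcal{O}$ is $r-1$; and $r \ge 2$ makes $\min(r, 1-r) = 1-r$. Thus $\mathrm{g.c.d}(G, H) = (r-1)\bigl(\mathcal{O} - \overline{D}\bigr)$, a divisor of degree $0$, which establishes condition (2).

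The main obstacle, and the step I expect to require the most care, is condition (3): proving that $(r-1)(\mathcal{O} - \overline{D})$ is not principal. Here I would use the Abel--Jacobi description recalled through the operator $\overline{\,\cdot\,}$, namely that a degree-zero divisor $A$ on the elliptic curve is principal precisely when $\overline{A} = \mathcal{O}$. Evaluating $\overline{\,\cdot\,}$ on $(r-1)(\mathcal{O} - \overline{D})$ via the group law yields the inverse of the $(r-1)$-fold sum $(r-1)\overline{D}$, which equals $\mathcal{O}$ if and only if $(r-1)\overline{D} = \mathcal{O}$, i.e. $\overline{D} \in E[r-1]$ --- exactly the case excluded by hypothesis. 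Hence $\mathrm{g.c.d}(G, H)$ is non-principal, condition (3) holds, and Theorem \ref{gconstruction} applies to give that $\mathcal{GC}(D, G, \mathbf{a})$ is LCD of dimension $2r-1$ with $d \ge n - 2r + 1$ and that its dual $\mathcal{GC}(D, H, \mathbf{a})$ is LCD of dimension $n - 2r + 1$ with $d^\bot \ge 2r - 1$.
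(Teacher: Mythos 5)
Your proof is correct and takes essentially the same route as the paper's: both verify that $G+H-D=(n-1)\mathcal{O}+\overline{D}-D$ is principal so the required $\omega$ exists, identify $\mathrm{g.c.d}(G,H)=(r-1)\mathcal{O}-(r-1)\overline{D}$, deduce it is non-principal from $\overline{D}\not\in E[r-1]$, and then invoke Theorem \ref{gconstruction}. You merely make explicit the coefficientwise g.c.d.\ computation and the Abel--Jacobi step ($A$ principal iff $\overline{A}=\mathcal{O}$) that the paper leaves implicit.
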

\begin{proof}
Note that $G+H-D=(n-1) \mathcal{O}+\overline D-D$ is a principal divisor. Then, there exists a Weil differential $\omega$ such that $(\omega)=(n-1) \mathcal{O}+\overline D-D$. From $\overline D\not \in E[r-1]$, $\mathrm{g.c.d}(G,H)=(r-1)\mathcal O-(r-1) \overline D$ is not a principal divisor.
This corollary follows from  Theorem \ref{gconstruction}.
\end{proof}

\begin{corollary}\label{rO+rQ}
Let $Q$ be a place on $E$ different from $\mathcal{O}$, $G=(r\cdot deg(Q)) \mathcal{O}+ r Q$, and $D=P_1+\cdots +P_n$ be a principal divisor, where $0<2r\cdot deg(Q)< n$, $\underbrace{\overline{Q}\oplus \cdots \oplus \overline{Q}} _{r}\neq \mathcal{O}$, and  $Q,\mathcal{O} \not \in Supp(D)$.  Let $\omega$ be the Weil differential such that
$(\omega)=n \mathcal{O}-D$. Assume that there is a vector $\mathbf a=(a_1,\cdots,a_n)\in (\mathbb{F}_q^\star)^n$ with $\mathrm{Res}_{P_i}(\omega)=a_i^2$. Then, $\mathcal{ GC}(D,G,\mathbf a)$ is a LCD code with dimension $2r\cdot deg(Q)$ and minimum distance $d\ge n-2r\cdot deg(Q)$,  and  the dual code $\mathcal{ GC}(D,H,\mathbf a)$ of $\mathcal{ GC}(D,G,\mathbf a)$  is a LCD code with dimension $n-2r\cdot deg(Q)$ and minimum distance $d^\bot \ge 2r\cdot deg(Q)$, where $H=(n-r\cdot deg(Q))\mathcal{O}-r Q$.
\end{corollary}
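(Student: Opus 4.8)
The plan is to recognize this as a direct instance of Theorem~\ref{gconstruction}, exactly mirroring the proof of Corollary~\ref{(r-1)O+r Q}: I would verify the three hypotheses of that theorem for the pair $G=(r\cdot deg(Q))\mathcal{O}+rQ$ and $H=(n-r\cdot deg(Q))\mathcal{O}-rQ$, and then read off the parameters. Hypothesis (1), the existence of $\mathbf a$ with $\mathrm{Res}_{P_i}(\omega)=a_i^2$, is assumed in the statement, and the support disjointness $Supp(G)\cap Supp(D)=Supp(H)\cap Supp(D)=\emptyset$ is immediate from $Q,\mathcal{O}\notin Supp(D)$. So the real work is to produce $\omega$ and to check hypotheses (2) and (3).

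First I would compute $G+H$. The $rQ$ terms cancel, leaving $G+H=n\mathcal{O}$, hence $G+H-D=n\mathcal{O}-D$. The principality hypothesis on $D$ (which amounts to $\overline D=\mathcal{O}$, so that $n\mathcal{O}-D$ is principal) makes $n\mathcal{O}-D$ a principal, hence canonical, divisor on the genus-one curve $E$; this is exactly what guarantees a Weil differential $\omega$ with $(\omega)=n\mathcal{O}-D=G+H-D$, as required by Theorem~\ref{gconstruction}.

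Next I would evaluate $\mathrm{g.c.d}(G,H)$ place by place. At $\mathcal{O}$ the coefficients are $r\cdot deg(Q)$ and $n-r\cdot deg(Q)$, and the inequality $0<2r\cdot deg(Q)<n$ forces $r\cdot deg(Q)<n-r\cdot deg(Q)$, so the minimum is $r\cdot deg(Q)$; at $Q$ the coefficients are $r$ and $-r$, with minimum $-r$. Thus $\mathrm{g.c.d}(G,H)=(r\cdot deg(Q))\mathcal{O}-rQ$, whose degree is $r\cdot deg(Q)-r\cdot deg(Q)=0$, establishing hypothesis (2).

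The main obstacle is hypothesis (3): that $\mathrm{g.c.d}(G,H)=(r\cdot deg(Q))\mathcal{O}-rQ$ is not principal. Since this is a degree-zero divisor, I would test principality through the group-law reduction used throughout the paper, namely that a degree-zero divisor on $E$ is principal if and only if its image $\overline{(\cdot)}$ is $\mathcal{O}$. The delicate point is that $Q$ is a single place of degree $deg(Q)>1$ in general, so it must be handled through the degree-zero divisor $Q-deg(Q)\mathcal{O}$, whose reduction is precisely the rational point $\overline Q$. Writing $\mathrm{g.c.d}(G,H)=-r\bigl(Q-deg(Q)\mathcal{O}\bigr)$, its reduction is the inverse of $\underbrace{\overline Q\oplus\cdots\oplus\overline Q}_{r}$, which is $\mathcal{O}$ exactly when $\underbrace{\overline Q\oplus\cdots\oplus\overline Q}_{r}=\mathcal{O}$. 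The hypothesis $\underbrace{\overline Q\oplus\cdots\oplus\overline Q}_{r}\neq\mathcal{O}$ therefore yields non-principality. With (1)--(3) and $\omega$ in place, Theorem~\ref{gconstruction} applies; since $deg(G)=2r\cdot deg(Q)$, it delivers the claimed parameters: $\mathcal{GC}(D,G,\mathbf a)$ is LCD with dimension $2r\cdot deg(Q)$ and $d\ge n-2r\cdot deg(Q)$, and its dual $\mathcal{GC}(D,H,\mathbf a)$ is LCD with dimension $n-2r\cdot deg(Q)$ and $d^\bot\ge 2r\cdot deg(Q)$.
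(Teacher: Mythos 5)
Your proposal is correct and follows essentially the same route as the paper: it verifies the hypotheses of Theorem~\ref{gconstruction} for the given $G$ and $H$, identifies $\mathrm{g.c.d}(G,H)=(r\cdot deg(Q))\mathcal{O}-rQ$, and deduces non-principality from $\underbrace{\overline{Q}\oplus\cdots\oplus\overline{Q}}_{r}\neq\mathcal{O}$. Your write-up merely supplies more detail (the reduction of $Q-deg(Q)\mathcal{O}$ to $\overline{Q}$) than the paper's two-line argument.
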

\begin{proof}
Note that $(\omega)=G+H-D=n \mathcal O-D$ and $\mathrm{g.c.d}(G,H)=(r\cdot deg(Q)) \mathcal O-rQ$. From $\underbrace{\overline{Q}\oplus \cdots \oplus \overline{Q}} _{r}\neq \mathcal{O}$, $\mathrm{g.c.d}(G,H)$ is not a principal divisor. The corollary follows   from Theorem \ref{gconstruction}.
\end{proof}

\section{explicit construction of LCD codes from elliptic curves}\label{expl}
The previous results presented in  Section \ref{gene} are of significance if there are interesting examples of elliptic curves and divisors $G,H,D$ satisfying the properties assumed in Theorem \ref{gconstruction}. Hence, in this section we present general examples,  where the assumptions of Theorem \ref{gconstruction} are satisfied.

Let $q=2^m$ and $E^{(0)}$ be an elliptic curve defined by the equation
\begin{align}\label{E^0}
y^2+y=x^3+bx+c,
\end{align}
where $b,c\in \mathbb F_q$. The point at infinity is denoted by $\mathcal O$.
Let $S$ be the set of $x$-components of  the affine points of $E^{(0)}$ over $\mathbb F_q$, that is,
\begin{align*}
S:=\{\alpha\in \mathbb F_q: \text{ there is } \beta\in \mathbb F_q \text{ such that } \beta^2+\beta=\alpha^3+b\alpha+c\}.
\end{align*}
For any $\alpha\in S$, there exactly exist two points with $x$-component $\alpha$. Denote these two points
corresponding to $\alpha$ by $P_{\alpha}^{+}$ and $P_{\alpha}^{-}$. Then the set $E^{(0)}(\mathbb F_q)$ of  all rational points of $E^{(0)}$ over $\mathbb F_q$ is  $E^{(0)}(\mathbb F_q)=\{P_{\alpha}^{+}: \alpha \in S\}\cup \{P_{\alpha}^{-}: \alpha \in S\} \cup \{\mathcal O\}$.
The following Lemma can be found in \cite{Has}.
\begin{lemma}\label{E^0}
Let $s$ be a positive integer, $\{\alpha_1,\cdots, \alpha_s\}$ be a subset of $ S$ with cardinality $s$,  and  $D=\sum _{i=1}^s (P_{\alpha_i}^+ + P_{\alpha_i}^-)$. Let $h=\prod_{i=1}^{s} (x+\alpha_i)$ and $\omega=\frac{d x}{h}$. Then, $(\omega)=2s\cdot \mathcal O-D$
and
\[\mathrm{Res}_{P_{\alpha_j}^{+}}(\omega)=\mathrm{Res}_{P_{\alpha_j}^{-}}(\omega)   =\frac{1}{\prod _{i=1,i\neq j}^{s} (\alpha_j+\alpha_i)},\]
for any $j\in \{1,2,\cdots,s\}$.
\end{lemma}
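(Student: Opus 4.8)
The plan is to prove Lemma \ref{E^0} (with label \texttt{E\^{}0}) by working entirely in the function field $\mathbb{F}_q(E^{(0)})$ generated by $x$ and $y$ with the single relation $y^2+y=x^3+bx+c$. Since $q=2^m$, taking differentials of this relation gives $dy = (3x^2+b)\,dx = (x^2+b)\,dx$ (as $3=1$ in characteristic $2$), so $dx$ is a well-defined nonzero differential and every other differential is an $\mathbb{F}_q(E^{(0)})$-multiple of it; in particular $\omega = dx/h$ makes sense once I know $h\neq 0$, which holds because the $\alpha_i$ are distinct elements of $S$. First I would compute the divisor $(dx)$. The standard fact for this model is that $dx$ has no zeros or poles at affine points and a double pole cancelled appropriately at infinity, so that $(dx) = -2\mathcal{O}$ on the genus-$1$ curve; this is consistent with $\deg((dx)) = 2g-2 = 0$. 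I would verify the local behaviour of $dx$ at each $P_{\alpha}^{\pm}$ and at $\mathcal{O}$ using a uniformizer ($x-\alpha$ works at an affine point where $x^2+b\neq 0$, while $y-\beta$ is the uniformizer at the two points where $x^2+b=0$; at $\mathcal{O}$ one uses $x/y$ as the local parameter with $v_{\mathcal{O}}(x)=-2$, $v_{\mathcal{O}}(y)=-3$).

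Next I would compute the divisor of $h=\prod_{i=1}^s(x+\alpha_i)$. For each $i$, the function $x+\alpha_i = x-\alpha_i$ (characteristic $2$) vanishes exactly at the two points $P_{\alpha_i}^{+}$ and $P_{\alpha_i}^{-}$, each to order one, and has a pole only at $\mathcal{O}$ of order $2$ (since $v_{\mathcal{O}}(x)=-2$). Hence
$$
(x+\alpha_i) = P_{\alpha_i}^{+}+P_{\alpha_i}^{-} - 2\mathcal{O},
$$
and summing over $i$ gives $(h) = D - 2s\,\mathcal{O}$. Combining with $(dx)=-2\mathcal{O}$ yields
$$
(\omega) = (dx) - (h) = -2\mathcal{O} - (D - 2s\,\mathcal{O}) = 2s\,\mathcal{O} - D,
$$
which is the first claimed equality. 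This part is essentially bookkeeping once the local orders of $x-\alpha_i$ and of $dx$ are pinned down.

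The residue computation is the genuine core, and the step I expect to be the main obstacle is getting the residue formula clean. At a fixed point $P_{\alpha_j}^{\pm}$ the differential $\omega = dx/h$ has a simple pole, and since $x-\alpha_j$ is a local uniformizer there, the residue is $\mathrm{Res}_{P_{\alpha_j}^{\pm}}(\omega) = \lim_{x\to\alpha_j}(x-\alpha_j)/h(x)$. Because $h$ factors as $(x-\alpha_j)\prod_{i\neq j}(x-\alpha_i)$, the factor $(x-\alpha_j)$ cancels and one is left with
$$
\mathrm{Res}_{P_{\alpha_j}^{\pm}}(\omega) = \frac{1}{\prod_{i\neq j}(\alpha_j-\alpha_i)} = \frac{1}{\prod_{i\neq j}(\alpha_j+\alpha_i)},
$$
the last equality again by characteristic $2$. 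The subtlety I would be careful about is justifying that $x-\alpha_j$ really is a uniformizer at both $P_{\alpha_j}^{+}$ and $P_{\alpha_j}^{-}$ and that the residue is computed by this na\"{i}ve substitution — this is legitimate precisely when these are distinct nonsingular points with $dx$ nonvanishing, i.e.\ when $\alpha_j^2+b\neq 0$; the boundary case $\alpha_j^2+b=0$ forces $P_{\alpha_j}^{+}=P_{\alpha_j}^{-}$ (a ramification point of $x$), so I would check that this degenerate situation is excluded by the hypothesis that there are exactly two points over each $\alpha\in S$, or otherwise handle it by passing to the uniformizer $y-\beta$ and using $dx=(x^2+b)^{-1}dy$ to recompute the residue, confirming it agrees with the stated formula. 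Since the two points share the same $x$-coordinate and the residue formula depends only on $x$, the equality $\mathrm{Res}_{P_{\alpha_j}^{+}}(\omega)=\mathrm{Res}_{P_{\alpha_j}^{-}}(\omega)$ then follows immediately.
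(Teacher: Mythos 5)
The paper offers no proof of this lemma at all --- it simply cites Hasse \cite{Has} --- so your argument stands on its own. Your overall strategy (compute $(dx)$ and $(h)$ separately, subtract, then read off the residues using $x-\alpha_j$ as a local parameter) is the right one and your residue computation is correct, but the divisor computation contains a genuine error that is only rescued by a second, compensating one. You assert $(dx)=-2\mathcal{O}$ and in the same sentence call this ``consistent with $\deg((dx))=2g-2=0$''; it is not, since $\deg(-2\mathcal{O})=-2$. In fact $(dx)=0$: in characteristic $2$ the invariant differential $dx/(2y+a_1x+a_3)$ of this Weierstrass model is $dx/1=dx$, so $dx$ is holomorphic and nonvanishing everywhere (equivalently, $x-\alpha$ is a uniformizer at every affine point, so $dx$ has no affine zeros or poles, and degree $2g-2=0$ then forces $v_{\mathcal{O}}(dx)=0$). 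Your next line, $-2\mathcal{O}-(D-2s\,\mathcal{O})=2s\,\mathcal{O}-D$, is an arithmetic slip (the left side equals $(2s-2)\mathcal{O}-D$) that happens to cancel the first mistake and land on the stated answer. As written, the computation of $(\omega)$ is not a proof; with $(dx)=0$ it becomes one immediately.

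Your discussion of uniformizers is also backwards. Since $\partial F/\partial y=2y+1=1\neq 0$ identically in characteristic $2$, the function $x-\alpha$ is a uniformizer at \emph{every} affine point of $y^2+y=x^3+bx+c$; the Artin--Schreier cover $x$ is unramified over the affine line, and the two points $(\alpha,\beta)$, $(\alpha,\beta+1)$ above any $\alpha\in S$ are always distinct. The degenerate case you worry about, ``$\alpha_j^2+b=0$ forces $P_{\alpha_j}^{+}=P_{\alpha_j}^{-}$,'' does not occur, and your proposed fallback (take $y-\beta$ as uniformizer and use $dx=(x^2+b)^{-1}dy$) would fail exactly at such points, where $y-\beta$ vanishes to order at least $2$ and $(x^2+b)^{-1}$ has a pole. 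Fortunately none of this is needed: the computation $\mathrm{Res}_{P_{\alpha_j}^{\pm}}(\omega)=\bigl[(x-\alpha_j)/h\bigr]_{x=\alpha_j}=1/\prod_{i\neq j}(\alpha_j+\alpha_i)$ is legitimate at both points above $\alpha_j$ without any case distinction, which is why your final formulas agree with the lemma.
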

The following result is a direct consequence of Corollary \ref{rO+rQ} and Lemma \ref{E^0}.
\begin{theorem}\label{E^0:rO+rQ}
Let $s$ be a positive integer, $\{\alpha_0,\alpha_1,\cdots, \alpha_s\}$ be a subset of $ S$ with cardinality $s+1$, $D= (P_{\alpha_1}^+ + P_{\alpha_1}^-)+\cdots+(P_{\alpha_s}^+ + P_{\alpha_s}^-)$
and $G=r\mathcal O+r P_{\alpha_0}^+$, where $0<r<s$ and $P_{\alpha_0}^+\not \in E^{(0)}[r]$. Let $b_j=\frac{1}{\prod _{i=1,i\neq j}^{s} (\alpha_j^{2^{m-1}}+\alpha_i^{2^{m-1}})}$ $(j=1,\cdots,s)$
and $\mathbf a=(b_1,b_1,\cdots,b_s,b_s)$. Then, $\mathcal{ GC}(D,G,\mathbf a)$ is a LCD code with dimension $2r$ and minimum distance $d\ge 2(s-r)$,  and  the dual code $\mathcal{ GC}(D,H,\mathbf a)$ of $\mathcal{ GC}(D,G,\mathbf a)$  is a LCD code with dimension $2(s-r)$ and minimum distance $d^\bot \ge 2r$, where $H=(2s-r)\mathcal{O}-r P_{\alpha_0}^+$.
\end{theorem}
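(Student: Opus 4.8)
The plan is to show that the data of the theorem is simply a special case of Corollary \ref{rO+rQ}, with $Q = P_{\alpha_0}^+$ and $n = 2s$, so that the conclusion is immediate once all hypotheses are matched. First I would record that $Q = P_{\alpha_0}^+$ is an affine rational point, whence $Q \neq \mathcal O$ and $deg(Q) = 1$; with $deg(Q) = 1$ the divisors $G = r\mathcal O + rP_{\alpha_0}^+$ and $H = (2s - r)\mathcal O - rP_{\alpha_0}^+$ are exactly the divisors $(r\cdot deg(Q))\mathcal O + rQ$ and $(n - r\cdot deg(Q))\mathcal O - rQ$ of the corollary. The degree condition $0 < 2r\cdot deg(Q) < n$ reads $0 < r < s$, and $\underbrace{\overline Q \oplus \cdots \oplus \overline Q}_{r} \neq \mathcal O$ reads $P_{\alpha_0}^+ \notin E^{(0)}[r]$; both are hypotheses of the theorem. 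Finally, since $\{\alpha_0, \ldots, \alpha_s\}$ consists of $s+1$ distinct elements we have $\alpha_0 \neq \alpha_i$ for $1 \le i \le s$, so $P_{\alpha_0}^+ \notin Supp(D)$, while $\mathcal O \notin Supp(D)$ because $D$ is a sum of affine points; thus $Q, \mathcal O \notin Supp(D)$.

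Next I would supply the Weil differential the corollary needs. Applying Lemma \ref{E^0} to $\{\alpha_1, \ldots, \alpha_s\}$ gives $\omega = dx/h$ with $h = \prod_{i=1}^s (x + \alpha_i)$ and $(\omega) = 2s\mathcal O - D$, which is precisely the differential $(\omega) = n\mathcal O - D$ for $n = 2s$. Its very existence certifies the divisor-class hypothesis $D \sim 2s\mathcal O$; this is consistent with $\overline D = \mathcal O$, which holds because $P_{\alpha_i}^+$ and $P_{\alpha_i}^-$ are mutually inverse points, so $P_{\alpha_i}^+ \oplus P_{\alpha_i}^- = \mathcal O$ for each $i$.

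The one step carrying real content, and the only place the characteristic-two hypothesis enters, is verifying assumption (1): $\mathrm{Res}_{P_i}(\omega) = a_i^2$. By Lemma \ref{E^0} the residue at both points $P_{\alpha_j}^{\pm}$ is $1/\prod_{i \neq j}(\alpha_j + \alpha_i)$, which is why $\mathbf a = (b_1, b_1, \ldots, b_s, b_s)$ repeats each $b_j$ twice to line up with the order of the points in $D$. It then suffices to check that $b_j$ is a square root of this residue, which I would do by squaring and using that $x \mapsto x^2$ is additive in characteristic $2$ together with $\alpha^{2^m} = \alpha^q = \alpha$ for $\alpha \in \mathbb F_q$:
\[
b_j^2 = \frac{1}{\prod_{i=1,\, i \neq j}^s (\alpha_j^{2^{m-1}} + \alpha_i^{2^{m-1}})^2} = \frac{1}{\prod_{i=1,\, i \neq j}^s (\alpha_j^{2^m} + \alpha_i^{2^m})} = \frac{1}{\prod_{i=1,\, i \neq j}^s (\alpha_j + \alpha_i)} = \mathrm{Res}_{P_{\alpha_j}^{\pm}}(\omega).
\]
With assumption (1) secured, Corollary \ref{rO+rQ} delivers all the stated parameters: $\mathcal{GC}(D, G, \mathbf a)$ is LCD of dimension $2r\cdot deg(Q) = 2r$ and distance $d \ge n - 2r = 2(s - r)$, and its dual $\mathcal{GC}(D, H, \mathbf a)$ is LCD of dimension $2(s - r)$ and distance $d^\bot \ge 2r$. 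I expect no genuine difficulty here; the only care needed is bookkeeping — selecting the correct square root $b_j$ (which exists because every element of $\mathbb F_{2^m}$ is a square) and pairing the entries of $\mathbf a$ with the residues in the order the points occur in $D$.
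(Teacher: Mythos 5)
Your proposal is correct and follows exactly the route the paper intends: the paper gives no written proof but states that the theorem is ``a direct consequence of Corollary \ref{rO+rQ} and Lemma \ref{E^0},'' and your argument is precisely the careful expansion of that one-liner, including the only nontrivial verification that $b_j^2=\alpha_j^{2^m}$-type Frobenius identities give $\mathrm{Res}_{P_{\alpha_j}^{\pm}}(\omega)=b_j^2$. No gaps.
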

\begin{example}
Let $q=2^4$, $\mathbb{F}_q=\mathbb{F}_2[\rho]$ with $\rho^4+\rho+1=0$, and $E^{(0)}$ be the elliptic curve defined by $y^2+y=x^3+\rho^3$. Let $P^+=(\rho,0)$
, $P^-=(\rho,1)$ and $D=E^{(0)}(\mathbb F_q) \backslash \{\mathcal O, P^+, P^-\}$. Then $4 P^+ \neq \mathcal O$ and $\# D=22$.
Let  $G=4\mathcal O+4P^+$ and $H=18\mathcal O-4P^+$. Then, $\mathcal{ GC}(D,G,\mathbf a)$ in Theorem \ref{E^0:rO+rQ} is a LCD code with parameters [22, 8, 14], and  the dual code $\mathcal{ GC}(D,H,\mathbf a)$ in Theorem \ref{E^0:rO+rQ}  of $\mathcal{ GC}(D,G,\mathbf a)$  is a LCD code with parameters [22, 14, 8], which is verified by  MAGMA.
\end{example}

\begin{corollary}\label{E^0:rO+rQ(coro)}
Let $N=\# E^{(0)}(\mathbb F_q)$ and $s$ be a positive integer, $\{\alpha_0,\alpha_1,\cdots, \alpha_s\}$ be a subset of $ S$ with cardinality $s+1$, $D= (P_{\alpha_1}^+ + P_{\alpha_1}^-)+\cdots+(P_{\alpha_s}^+ + P_{\alpha_s}^-)$ and $G=r\mathcal O+r P_{\alpha_0}^+$, where $0<r<s$ and $g.c.d(r,N)=1$. Let $b_j=\frac{1}{\prod _{i=1,i\neq j}^{s} (\alpha_j^{2^{m-1}}+\alpha_i^{2^{m-1}})}$ $(j=1,\cdots,s)$
and $\mathbf a=(b_1,b_1,\cdots,b_s,b_s)$. Then, $\mathcal{ GC}(D,G,\mathbf a)$ is a LCD code with dimension $2r$ and minimum distance $d\ge 2(s-r)$,  and  the dual code $\mathcal{ GC}(D,H,\mathbf a)$ of $\mathcal{ GC}(D,G,\mathbf a)$  is a LCD code with dimension $2(s-r)$ and minimum distance $d^\bot \ge 2r$, where $H=(2s-r)\mathcal{O}-r P_{\alpha_0}^+$.
\end{corollary}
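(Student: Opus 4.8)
The plan is to deduce this corollary directly from Theorem \ref{E^0:rO+rQ}. The two statements share identical data (the divisors $D$, $G$, $H$, the scalars $b_j$, the vector $\mathbf a$, and the claimed parameters) and differ only in the hypothesis imposed on $r$: the theorem assumes $P_{\alpha_0}^+\notin E^{(0)}[r]$, whereas here we assume $\gcd(r,N)=1$. Thus it suffices to show that the coprimality condition $\gcd(r,N)=1$ forces $P_{\alpha_0}^+\notin E^{(0)}[r]$, after which Theorem \ref{E^0:rO+rQ} yields all the asserted conclusions verbatim.

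To carry this out, I would work inside the finite abelian group $E^{(0)}(\mathbb F_q)$ of order $N$ under the elliptic curve addition $\oplus$. Since $\alpha_0\in S\subseteq \mathbb F_q$, the point $P_{\alpha_0}^+$ is an $\mathbb F_q$-rational affine point, so in particular $P_{\alpha_0}^+\neq \mathcal O$; hence its order $d$ in $E^{(0)}(\mathbb F_q)$ satisfies $d\geq 2$. By Lagrange's theorem the order of any element divides the group order, so $d\mid N$. Moreover, by definition $P_{\alpha_0}^+\in E^{(0)}[r]$ is equivalent to $\underbrace{P_{\alpha_0}^+\oplus\cdots\oplus P_{\alpha_0}^+}_{r}=\mathcal O$, which in turn holds if and only if $d\mid r$.

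The contradiction argument then closes the gap. Suppose, for contradiction, that $P_{\alpha_0}^+\in E^{(0)}[r]$, i.e.\ $d\mid r$. Combined with $d\mid N$ this gives $d\mid \gcd(r,N)=1$, forcing $d=1$ and hence $P_{\alpha_0}^+=\mathcal O$, contradicting that $P_{\alpha_0}^+$ is an affine point. Therefore $P_{\alpha_0}^+\notin E^{(0)}[r]$, and applying Theorem \ref{E^0:rO+rQ} with exactly the present $D$, $G$, $H$, the scalars $b_j$, and $\mathbf a$ delivers the stated parameters for $\mathcal{GC}(D,G,\mathbf a)$ and for its dual $\mathcal{GC}(D,H,\mathbf a)$, both being LCD codes.

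There is essentially no technical obstacle here; the only point requiring care is the reduction itself, namely recognizing that the two divisibility constraints, one coming from the order of $P_{\alpha_0}^+$ and one from Lagrange's theorem, combine through the coprimality of $r$ and $N$. In this sense the hypothesis $\gcd(r,N)=1$ merely serves as a convenient and easily checkable sufficient condition that replaces the more intrinsic non-torsion requirement $P_{\alpha_0}^+\notin E^{(0)}[r]$ of the theorem, which is exactly why the result is phrased as a corollary rather than an independent theorem.
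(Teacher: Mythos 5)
Your proof is correct and follows essentially the same route as the paper: both reduce the corollary to Theorem \ref{E^0:rO+rQ} by showing that $\gcd(r,N)=1$ forces $P_{\alpha_0}^+\notin E^{(0)}[r]$. The only cosmetic difference is that you argue via the order of $P_{\alpha_0}^+$ dividing $\gcd(r,N)$, while the paper writes out a B\'ezout identity $k_1r+k_2N=1$ and computes $(k_1r)P_{\alpha_0}^+=P_{\alpha_0}^+\neq\mathcal O$; these are the same group-theoretic fact.
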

\begin{proof}
From $g.c.d(r,N)=1$, there exist integers $k_1$ and $k_2$ such that $k_1r+k_2N=1$. Then, $(k_1r) P_{\alpha_0}^+=P_{\alpha_0}^+ \ominus (k_2N) P_{\alpha_0}^+=P_{\alpha_0}^+\neq \mathcal O$. Thus, $P_{\alpha_0}^+\not \in E^{(0)}[r]$.
This corollary follows from Theorem \ref{E^0:rO+rQ}.
\end{proof}
\begin{remark}
The following Table I lists the numbers of rational points of
some elliptic curves over
$\mathbb{F}_{2^m}$ \cite{Men}. For general elliptic curves over a finite field, we can use the Schoof's algorithms \cite{Sch} to count the number of rational  points.

\begin{table}[htbp]\label{tab1}
\newcommand{\tabincell}[2]{\begin{tabular}{@{}#1@{}}#2\end{tabular}}
\centering
 \caption{the numbers of rational points of elliptic curves
 over $\mathbb{F}_{q} (q=2^m)$}
\begin{tabular}{|c|c|c|}
\hline
Elliptic Curve $E^{(0)}$ & $m$ & $\#E(\mathbb{F}_{2^m})$\\
\hline
$y^2+y
 =x^3$ & \tabincell{c}{ odd $m$ \\
 $m\equiv 0 \mod 4$ \\
 $m\equiv 2\mod 4$} & \tabincell{c}{
 $q+1$ \\ $q+1-2\sqrt{q}$ \\$
 q+1+2\sqrt{q}$}\\\hline
$y^2+y=x^3+x$ & \tabincell{c}{
$m\equiv 1,7 \mod 8$\\
$m\equiv 3,5\mod 8$} & \tabincell{c}{
$q+1+\sqrt{2q}$\\ $q+1-\sqrt{2q}$} \\\hline
$y^2+y=x^3+x+1$ & \tabincell{c}{
$m\equiv 1,7 \mod 8$\\
$m\equiv 3,5\mod 8$} & \tabincell{c}{
$q+1-\sqrt{2q}$\\ $q+1+\sqrt{2q}$} \\\hline
$y^2+y=x^3+\delta x
~(\mathrm{Tr}^m_1(\delta)=1)$ &
even $m$ & $q+1$\\\hline
$y^2+y=x^3+\omega~
(\mathrm{Tr}^m_1(\omega)=1)$ & \tabincell{c}{
$m\equiv 0\mod 4$\\
$m\equiv 2\mod 4$} & \tabincell{c}{
$q+1+2\sqrt{q}$\\ $q+1-2\sqrt{q}$} \\\hline
\end{tabular}
\end{table}
\end{remark}

\begin{theorem}\label{E^0:(r+1)O+rQ}
Let $s$ be a positive integer, $\{\alpha_1,\cdots, \alpha_s\}$ be a subset of $ S$ with cardinality $s$, $D=P_{\alpha_1}^- + (P_{\alpha_2}^+ + P_{\alpha_2}^-)+\cdots+(P_{\alpha_s}^+ + P_{\alpha_s}^-)$ and $G=(r+1)\mathcal O+r P_{\alpha_1}^+$, such that $0\le r<s-1$ and $P_{\alpha_1}^+\not \in E^{(0)}[r+1]$. Let $b_j=\frac{1}{\prod _{i=1,i\neq j}^{s} (\alpha_j^{2^{m-1}}+\alpha_i^{2^{m-1}})}$ $(j=1,\cdots,s)$
and $\mathbf a=(b_1,b_2,b_2\cdots,b_s,b_s)$. Then, $\mathcal{ GC}(D,G,\mathbf a)$ is a LCD code with dimension $2r+1$ and minimum distance $d\ge 2(s-r-1)$,  and  the dual code $\mathcal{ GC}(D,H,\mathbf a)$ of $\mathcal{ GC}(D,G,\mathbf a)$  is a LCD code with dimension $2(s-r-1)$ and minimum distance $d^\bot \ge 2r+1$, where $H=(2s-r-1)\mathcal{O}-(r+1) P_{\alpha_1}^+$.
\end{theorem}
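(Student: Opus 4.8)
The plan is to realize this statement as a direct application of Theorem \ref{gconstruction}, with the Weil differential and its residues supplied by Lemma \ref{E^0}. First I would record the basic data: the evaluation divisor $D = P_{\alpha_1}^- + \sum_{i=2}^s (P_{\alpha_i}^+ + P_{\alpha_i}^-)$ has degree $n = 2s-1$, while $\deg G = (r+1)+r = 2r+1$ and $\deg H = (2s-r-1)-(r+1) = 2(s-r-1)$. Thus the announced dimensions $2r+1$ and $2(s-r-1)$ and the distance bounds $n-\deg G$ and $n-\deg H$ are exactly what Theorem \ref{[n,k,d]} yields once the LCD property is established, and the hypothesis $0\le r<s-1$ gives $0<\deg G<n$.

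Next I would produce the differential. Set $h=\prod_{i=1}^s (x+\alpha_i)$ and $\omega=dx/h$; by Lemma \ref{E^0}, applied to the full divisor $D'=\sum_{i=1}^s(P_{\alpha_i}^+ + P_{\alpha_i}^-)$, one has $(\omega)=2s\mathcal{O}-D'$ with $\mathrm{Res}_{P_{\alpha_j}^+}(\omega)=\mathrm{Res}_{P_{\alpha_j}^-}(\omega)=1/\prod_{i\neq j}(\alpha_j+\alpha_i)$. The crucial bookkeeping step is the identity $G+H=2s\mathcal{O}-P_{\alpha_1}^+$, whence $G+H-D=2s\mathcal{O}-(P_{\alpha_1}^+ + D)=2s\mathcal{O}-D'=(\omega)$: the single point $P_{\alpha_1}^+$ deliberately omitted from $D$ is precisely the one cancelled by the coefficient $-(r+1)$ of $P_{\alpha_1}^+$ in $H$. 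This is what makes the construction yield an odd-dimensional code.

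Then I would check the three hypotheses of Theorem \ref{gconstruction}. For (1), in characteristic $2$ the squaring identity $(\alpha_j^{2^{m-1}}+\alpha_i^{2^{m-1}})^2=\alpha_j^{2^m}+\alpha_i^{2^m}=\alpha_j+\alpha_i$ gives $b_j^2=\mathrm{Res}_{P_{\alpha_j}^{\pm}}(\omega)$, and matching the point ordering of $D$ shows that $\mathbf a=(b_1,b_2,b_2,\ldots,b_s,b_s)$ satisfies $\mathrm{Res}_{P_i}(\omega)=a_i^2$ (with $b_1$ attached to $P_{\alpha_1}^-$ and each $b_j$, $j\ge 2$, attached to the pair $P_{\alpha_j}^{\pm}$). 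For (2) and (3) I compute $\mathrm{g.c.d}(G,H)=(r+1)(\mathcal{O}-P_{\alpha_1}^+)$, which has degree $0$; and since on the elliptic curve a degree-$0$ divisor $(r+1)(\mathcal{O}-P_{\alpha_1}^+)$ is principal exactly when $\underbrace{P_{\alpha_1}^+\oplus\cdots\oplus P_{\alpha_1}^+}_{r+1}=\mathcal{O}$, i.e. when $P_{\alpha_1}^+\in E^{(0)}[r+1]$, the hypothesis $P_{\alpha_1}^+\notin E^{(0)}[r+1]$ forces non-principality. Finally $Supp(G)=Supp(H)=\{\mathcal{O},P_{\alpha_1}^+\}$ is disjoint from $Supp(D)$, since $D$ contains $P_{\alpha_1}^-$ but not $P_{\alpha_1}^+$.

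With all hypotheses verified, Theorem \ref{gconstruction} delivers that $\mathcal{GC}(D,G,\mathbf a)$ is LCD with dual $\mathcal{GC}(D,H,\mathbf a)$, and the stated dimensions and minimum-distance bounds follow from Theorem \ref{[n,k,d]}. I expect the only non-mechanical points to be the two identifications in the preceding paragraph: verifying $(\omega)=G+H-D$ after dropping $P_{\alpha_1}^+$ from $D$, and translating ``$\mathrm{g.c.d}(G,H)$ is not principal'' into the torsion condition $P_{\alpha_1}^+\notin E^{(0)}[r+1]$. Everything else is routine degree counting together with the characteristic-$2$ residue computation.
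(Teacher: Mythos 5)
Your proposal is correct and follows the same route as the paper's (much terser) proof: apply Lemma \ref{E^0} to the full divisor $\sum_{i=1}^s(P_{\alpha_i}^++P_{\alpha_i}^-)$ to get $(\omega)=G+H-D$, identify $\mathrm{g.c.d}(G,H)=(r+1)(\mathcal O-P_{\alpha_1}^+)$, translate non-principality into $P_{\alpha_1}^+\notin E^{(0)}[r+1]$, and invoke Theorem \ref{gconstruction}. The extra details you supply (the cancellation of the omitted point $P_{\alpha_1}^+$ against the coefficient $-(r+1)$ in $H$, and the characteristic-$2$ residue check) are exactly the steps the paper leaves implicit.
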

\begin{proof}
From Lemma \ref{E^0}, $(\omega)=G+H-D$, where $\omega=\frac{1}{\prod _{i=1}^{s} (x+\alpha_i)}dx$. Note that $\mathrm{g.c.d}(G,H)=(r+1)\mathcal O-(r+1)P_{\alpha_1}^+$. This theorem follows from  $P_{\alpha_1}^+\not \in E^{(0)}[r+1]$ and Theorem \ref{gconstruction}.
\end{proof}
\begin{example}
Let $q=2^4$, $\mathbb{F}_q=\mathbb{F}_2[\rho]$ with $\rho^4+\rho+1=0$ and $E^{(0)}$ be the elliptic curve defined by $y^2+y=x^3+\rho^3$. Let $P^+=(\rho,0)$ and $D=E^0(\mathbb F_q) \backslash \{\mathcal O, P^+\}$. Then $4 P^+ \neq \mathcal O$ and $\# D=23$.
Let  $G=4\mathcal O+3P^+$ and $H=20\mathcal O-4P^+$. Then, $\mathcal{ GC}(D,G,\mathbf a)$ in Theorem \ref{E^0:(r+1)O+rQ} is a LCD code with parameters [23, 7, 16], and  the dual code $\mathcal{ GC}(D,H,\mathbf a)$ in Theorem \ref{E^0:(r+1)O+rQ}  of $\mathcal{ GC}(D,G,\mathbf a)$  is a LCD code with parameters [23, 16, 7], which is verified by MAGMA.
\end{example}

\begin{corollary}\label{E^0:(r+1)O+rQ(coro)}
Let $N=\# E^{(0)}(\mathbb F_q)$ and $r, s$ be positive integers, where $0\le r<s-1$ and $g.c.d(r+1,N)=1$. Let $\{\alpha_1,\cdots, \alpha_s\}$ be a subset of $ S$ with cardinality $s$, $D=P_{\alpha_1}^- + (P_{\alpha_2}^+ + P_{\alpha_2}^-)+\cdots+(P_{\alpha_s}^+ + P_{\alpha_s}^-)$ and $G=(r+1)\mathcal O+r P_{\alpha_1}^+$. Let $b_j=\frac{1}{\prod _{i=1,i\neq j}^{s} (\alpha_j^{2^{m-1}}+\alpha_i^{2^{m-1}})}$ $(j=1,\cdots,s)$
and $\mathbf a=(b_1,b_2,b_2\cdots,b_s,b_s)$. Then, $\mathcal{ GC}(D,G,\mathbf a)$ is a LCD code with dimension $2r+1$ and minimum distance $d\ge 2(s-r-1)$,  and  the dual code $\mathcal{ GC}(D,H,\mathbf a)$ of $\mathcal{ GC}(D,G,\mathbf a)$  is a LCD code with dimension $2(s-r-1)$ and minimum distance $d^\bot \ge 2r+1$, where $H=(2s-r-1)\mathcal{O}-(r+1) P_{\alpha_1}^+$.
\end{corollary}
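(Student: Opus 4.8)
The plan is to reduce this corollary to Theorem~\ref{E^0:(r+1)O+rQ}, exactly as Corollary~\ref{E^0:rO+rQ(coro)} was reduced to Theorem~\ref{E^0:rO+rQ}. Every hypothesis in the two statements coincides---the divisors $D$, $G$, $H$, the vector $\mathbf a$, and the range $0\le r<s-1$---except that Theorem~\ref{E^0:(r+1)O+rQ} assumes the torsion condition $P_{\alpha_1}^+\not\in E^{(0)}[r+1]$, whereas here we only assume $\mathrm{g.c.d}(r+1,N)=1$. So the entire task is to show that this arithmetic condition forces the geometric one; once that is done, the parameters of $\mathcal{GC}(D,G,\mathbf a)$ and of its dual are inherited verbatim from Theorem~\ref{E^0:(r+1)O+rQ}.

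The key fact I would use is that $N=\#E^{(0)}(\mathbb F_q)$ is the order of the finite abelian group $E^{(0)}(\mathbb F_q)$, so $N\cdot P=\mathcal O$ for every rational point $P$, in particular for $P_{\alpha_1}^+$. First I would apply B\'ezout's identity to $\mathrm{g.c.d}(r+1,N)=1$ to obtain integers $k_1,k_2$ with $k_1(r+1)+k_2N=1$. Evaluating this relation on the point $P_{\alpha_1}^+$ in the group law gives
\[
(k_1(r+1))P_{\alpha_1}^+=P_{\alpha_1}^+\ominus(k_2N)P_{\alpha_1}^+=P_{\alpha_1}^+,
\]
since $(k_2N)P_{\alpha_1}^+=k_2(NP_{\alpha_1}^+)=\mathcal O$.

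I would then argue by contradiction. If $P_{\alpha_1}^+\in E^{(0)}[r+1]$, i.e.\ $(r+1)P_{\alpha_1}^+=\mathcal O$, then $(k_1(r+1))P_{\alpha_1}^+=k_1\mathcal O=\mathcal O$, which by the displayed identity forces $P_{\alpha_1}^+=\mathcal O$. This contradicts the fact that $P_{\alpha_1}^+$ is an affine point lying above $\alpha_1\in S$, hence distinct from the point at infinity $\mathcal O$. Therefore $P_{\alpha_1}^+\not\in E^{(0)}[r+1]$, and Theorem~\ref{E^0:(r+1)O+rQ} applies directly. I do not expect any genuine obstacle: the argument is a short group-theoretic reduction mirroring the proof of Corollary~\ref{E^0:rO+rQ(coro)}, and the only point requiring mild care is the bookkeeping of scalar multiplication in the additive group law together with the use of $N\cdot P=\mathcal O$ (Lagrange's theorem).
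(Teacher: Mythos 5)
Your proposal is correct and matches the paper's own argument: the paper proves this corollary by citing Theorem \ref{E^0:(r+1)O+rQ} together with ``similar arguments'' to those in Corollary \ref{E^0:rO+rQ(coro)}, which is precisely the B\'ezout/Lagrange reduction you carry out (with $r+1$ in place of $r$). You have simply written out explicitly what the paper leaves implicit.
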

\begin{proof}
The result follows from Theorem \ref{E^0:(r+1)O+rQ} and similar arguments  used in the proof of Corollary \ref{E^0:rO+rQ(coro)}.
\end{proof}

\begin{theorem}\label{E: non-special}
Let $r,s$ be   integers with $0\le r<\frac{s-2}{2}$,  $\{\alpha_0, \alpha_1,\cdots, \alpha_s\}$ be a subset of $ S$ with cardinality $s+1$,  $D=\sum _{i=1}^s (P_{\alpha_i}^+ + P_{\alpha_i}^-)$ and $G=(2r+3)\cdot \mathcal O+ r \cdot (P_{\alpha_0}^+ +P_{\alpha_0}^-)+   P_{\alpha_0}^+ $. Let $b_j=\frac{1}{\prod _{i=0,i\neq j}^{s} (\alpha_j^{2^{m-1}}+\alpha_i^{2^{m-1}})}$ $(j=1,\cdots,s)$
and $\mathbf a=(b_1,b_1,b_2,b_2\cdots,b_s,b_s)$. Then, $\mathcal{ GC}(D,G, \mathbf a)$ is a LCD code with dimension $4(r+1)$ and minimum distance $d\ge 2s-4(r+1)$,  and  the dual code $\mathcal{ GC}(D,H,\mathbf a)$ of $\mathcal{ GC}(D,G, \mathbf a)$  is a LCD code with dimension $2s-4(r+1)$ and minimum distance $d^\bot \ge 4(r+1)$, where $H=( 2s-2r-1)\cdot \mathcal O-(r +2) \cdot (P_{\alpha_0}^+ +P_{\alpha_0}^-)+ P_{\alpha_0}^- $.
\end{theorem}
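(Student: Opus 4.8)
The plan is to check that the data $(D,G,H,\omega,\mathbf a)$ in the statement satisfies the three hypotheses of Theorem~\ref{gconstruction}; once they hold, the dimensions and distance bounds follow at once from that theorem, since on an elliptic curve $g=1$ and hence $k=\deg(G)$. I would first produce the Weil differential by applying Lemma~\ref{E^0} to the \emph{enlarged} index set $\{\alpha_0,\alpha_1,\dots,\alpha_s\}$ of size $s+1$: taking $\omega=\tfrac{dx}{\prod_{i=0}^{s}(x+\alpha_i)}$ gives $(\omega)=2(s+1)\mathcal O-\sum_{i=0}^{s}(P_{\alpha_i}^{+}+P_{\alpha_i}^{-})$ with $\mathrm{Res}_{P_{\alpha_j}^{\pm}}(\omega)=1/\prod_{i=0,i\neq j}^{s}(\alpha_j+\alpha_i)$. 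Collecting coefficients of $G+H$, the coefficient of $\mathcal O$ is $(2r+3)+(2s-2r-1)=2s+2$, while those of $P_{\alpha_0}^{+}$ and $P_{\alpha_0}^{-}$ are each $-1$; hence $G+H-D=(\omega)$, and $\mathrm{Supp}(G)$, $\mathrm{Supp}(H)$ are disjoint from $\mathrm{Supp}(D)$ because $\alpha_0\notin\{\alpha_1,\dots,\alpha_s\}$.

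Hypothesis (1) is a characteristic-two computation: squaring $b_j$ and using $(\alpha_j^{2^{m-1}}+\alpha_i^{2^{m-1}})^2=\alpha_j^{2^m}+\alpha_i^{2^m}=\alpha_j+\alpha_i$ (valid since $\alpha_i\in\mathbb F_q$ with $q=2^m$) yields $b_j^2=\mathrm{Res}_{P_{\alpha_j}^{\pm}}(\omega)$, so $\mathbf a=(b_1,b_1,\dots,b_s,b_s)$ meets the requirement $\mathrm{Res}_{P_i}(\omega)=a_i^2$ along the places of $D$. For hypothesis (2), reading off minima of coefficients gives $\mathrm{g.c.d}(G,H)=(2r+3)\mathcal O-(r+2)P_{\alpha_0}^{+}-(r+1)P_{\alpha_0}^{-}$, where the coefficient at $\mathcal O$ is the minimum precisely because $r<\tfrac{s-2}{2}$ forces $2r+3<2s-2r-1$; its degree is $(2r+3)-(r+2)-(r+1)=0$.

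The crux is hypothesis (3), that $\mathrm{g.c.d}(G,H)$ is not principal. Since a canonical divisor of $E$ is principal and a degree-$0$ divisor $D'$ on an elliptic curve is principal iff $\overline{D'}=\mathcal O$, I would evaluate $\overline{\mathrm{g.c.d}(G,H)}$ in the group law. Using $P_{\alpha_0}^{-}=\ominus P_{\alpha_0}^{+}$ (the two points with abscissa $\alpha_0$ on $y^2+y=x^3+bx+c$ are mutual inverses) and $(2r+3)\mathcal O=\mathcal O$, this reduces to
\[
\overline{\mathrm{g.c.d}(G,H)}=\ominus(r+2)P_{\alpha_0}^{+}\oplus(r+1)P_{\alpha_0}^{+}=\ominus P_{\alpha_0}^{+}=P_{\alpha_0}^{-}.
\]
The instructive point, and what distinguishes this theorem from Theorems~\ref{E^0:rO+rQ} and~\ref{E^0:(r+1)O+rQ}, is that the asymmetric extra terms $+P_{\alpha_0}^{+}$ in $G$ and $+P_{\alpha_0}^{-}$ in $H$ collapse the sum to $\ominus P_{\alpha_0}^{+}$, which is automatically distinct from $\mathcal O$ since $P_{\alpha_0}^{+}$ is an affine point; thus no torsion restriction on $P_{\alpha_0}^{+}$ is required. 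With all three hypotheses verified, Theorem~\ref{gconstruction} yields both LCD codes; finally $\deg(G)=4(r+1)$ and $n=\deg(D)=2s$, and $0<\deg(G)<n$ is again equivalent to $r<\tfrac{s-2}{2}$, giving the stated parameters.
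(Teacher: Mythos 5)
Your proposal is correct and follows the same route as the paper: apply Lemma~\ref{E^0} to the enlarged set $\{\alpha_0,\dots,\alpha_s\}$ to get $\omega=\frac{dx}{\prod_{i=0}^{s}(x+\alpha_i)}$ with $(\omega)=G+H-D$, verify the residue/square condition for $\mathbf a$, compute $\mathrm{g.c.d}(G,H)=(2r+3)\mathcal O-(r+2)P_{\alpha_0}^{+}-(r+1)P_{\alpha_0}^{-}\sim P_{\alpha_0}^{-}-\mathcal O$ via $P_{\alpha_0}^{+}\oplus P_{\alpha_0}^{-}=\mathcal O$, conclude it is non-principal, and invoke Theorem~\ref{gconstruction}. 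You simply make explicit several checks the paper leaves implicit (support disjointness, the degree-zero computation, and why no torsion condition on $P_{\alpha_0}^{+}$ is needed), which is a welcome clarification but not a different argument.
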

\begin{proof}
Let $h=\prod_{i=0}^{s} (x+\alpha_i)$ and $\omega=\frac{1}{h} dx$. Then, from Lemma \ref{E^0}, $(\omega)=G+H-D. $  Note that $\mathrm{g.c.d}(G,H)=(2r+3)\mathcal O-(r+2)(P_{\alpha_0}^+ +
P_{\alpha_0}^-)+ P_{\alpha_0}^-$. From $P_{\alpha_0}^+\oplus P_{\alpha_0}^-=\mathcal O$, $\mathrm{g.c.d}(G,H)\sim P_{\alpha_0}^--\mathcal O$. Thus, $\mathrm{g.c.d}(G,H)$ is not a principal divisor. This theorem follows from Theorem \ref{gconstruction}.
\end{proof}

\begin{example}
Let $q=2^4$, $\mathbb{F}_q=\mathbb{F}_2[\rho]$ with $\rho^4+\rho+1=0$ and $E^{(0)}$ be the elliptic curve defined by $y^2+y=x^3+\rho^3$. Let $P^+=(\rho,0)$
, $P^-=(\rho,1)$ and $D=E^0(\mathbb F_q) \backslash \{\mathcal O, P^+, P^-\}$. Then $\# D=22$.
Let  $G=3\mathcal O+P^+$ and $H=21\mathcal O-2(P^+ +P^-)+P^-$. Then, $\mathcal{ GC}(D,G,\mathbf a)$ in Theorem \ref{E: non-special} is a LCD code with parameters [22, 4, 18], and  the dual code $\mathcal{ GC}(D,H,\mathbf a)$ in Theorem \ref{E: non-special}  of $\mathcal{ GC}(D,G,\mathbf a)$  is a LCD code with parameters [22, 18, 4], which is verified by  MAGMA. From the Remark below\ref{gconstruction}, both $\mathcal{ GC}(D,G,\mathbf a)$ and $\mathcal{ GC}(D,H,\mathbf a)$ are optimal.
\end{example}

\section{LCD codes from general algebraic curves}\label{comp}
In this section we consider the construction of LCD  codes
from any algebraic curves, present
a construction mechanism of LCD codes from algebraic geometry codes, and give concrete construction of
LCD codes from hyperelliptic curves and
Hermitian curves.

Two theorems on constructing
LCD codes from algebraic curves are given below.
\begin{theorem}\label{gconstruction for curves}
Let $C$ be a smooth  projective curve of genus $g$ over $\mathbb{F}_q$ and  $G$, $D=P_1+P_2+\cdots +P_n$ be
two divisors over $C$, where $2g-2<deg(G)<n$.  Let $\omega$ be a Weil differential such that $(w)=G+H-D$ for some divisor $H$ and $Supp(G)\cap Supp(D)=Supp(H)\cap Supp(D)=\emptyset$.    Assume that

{\rm (1)} There is a vector $\mathbf a=(a_1,\cdots,a_n)\in (\mathbb{F}_q^\star)^n$ with $\mathrm{Res}_{P_i}(\omega)=a_i^2$;

{\rm (2)} $\mathrm{g.c.d}(G,H)$ is a non-special divisor of degree $g-1$.

Then, $\mathcal{ GC}(D,G,\mathbf a)$ is a LCD code with dimension $deg(G)+1-g$ and minimum distance $d\ge n-deg(G)$,  and  the dual code $\mathcal{ GC}(D,H,\mathbf a)$ of $\mathcal{ GC}(D,G,\mathbf a)$  is a LCD code with dimension $n-deg(G)-1+g$ and minimum distance $d^\bot \ge deg(G)+2-2g$.
\end{theorem}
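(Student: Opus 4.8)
The plan is to follow the blueprint of the elliptic case (Theorem~\ref{gconstruction}) but to replace the two genus-$1$ hypotheses on $\mathrm{g.c.d}(G,H)$ (degree $0$ and non-principal) by the single genus-$g$ hypothesis that $\mathrm{g.c.d}(G,H)$ be non-special of degree $g-1$. The heart of the argument is again to show that the two underlying classical algebraic geometry codes meet trivially, namely $\mathcal{C}(D,G)\cap\mathcal{C}(D,H)=\{0\}$. Once this is established, assumption~(1) together with the duality relation $\mathcal{C}(D,G)^\bot=\mathbf e\cdot\mathcal{C}(D,H)$ from Equation~(\ref{dual}), where $\mathbf e=(a_1^2,\dots,a_n^2)$, and Proposition~\ref{a.C} immediately yield that $\mathcal{GC}(D,G,\mathbf a)$ is LCD with dual $\mathcal{GC}(D,H,\mathbf a)$.

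First I would record the two elementary divisor identities that drive everything: for any two divisors one has $\mathrm{g.c.d}(G,H)+\mathrm{l.m.d}(G,H)=G+H$, so from $(\omega)=G+H-D$ it follows that
\[
\mathrm{l.m.d}(G,H)-D=(\omega)-\mathrm{g.c.d}(G,H).
\]
I would also note the degree bookkeeping $\deg(H)=n+2g-2-\deg(G)$ coming from $\deg((\omega))=2g-2$, which is exactly what makes the claimed parameters come out (and which, together with the hypothesis $2g-2<\deg(G)<n$, shows that $H$ again satisfies $v_{P_i}(H)=0$ and $2g-2<\deg(H)<n$, so that Theorem~\ref{[n,k,d]} applies to $H$ as well).

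Next I would prove the trivial-intersection claim as in the elliptic case, taking $f\in\mathcal{L}(G)$ and $g\in\mathcal{L}(H)$ with $f(P_i)=g(P_i)$ for all $i$ and setting $h:=f-g$. The decisive point is that the single non-speciality hypothesis disposes of both cases through the two halves of the Riemann-Roch theorem (Theorem~\ref{Riemann-Roch}). If $h=0$, then $f\in\mathcal{L}(G)\cap\mathcal{L}(H)=\mathcal{L}(\mathrm{g.c.d}(G,H))$; since $\mathrm{g.c.d}(G,H)$ is non-special of degree $g-1$, the identity $l(A)-i(A)=\deg(A)+1-g$ gives $l(\mathrm{g.c.d}(G,H))=0+(g-1)+1-g=0$, so $f=g=0$. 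If $h\neq 0$, the support hypotheses force $v_{P_i}(\mathrm{l.m.d}(G,H))=0$, so the vanishing $h(P_i)=0$ places $h$ in $\mathcal{L}(\mathrm{l.m.d}(G,H)-D)=\mathcal{L}((\omega)-\mathrm{g.c.d}(G,H))$; the second half of Riemann-Roch identifies this space with $i(\mathrm{g.c.d}(G,H))=0$, so $h=0$, a contradiction. Hence $\mathcal{C}(D,G)\cap\mathcal{C}(D,H)=\{0\}$.

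Having established the trivial intersection, I would invoke Proposition~\ref{a.C} to conclude both LCD statements, and read off the parameters from Theorem~\ref{[n,k,d]}: the code $\mathcal{GC}(D,G,\mathbf a)$ has dimension $\deg(G)+1-g$ and $d\ge n-\deg(G)$, while applying the same theorem to $H$ (with $\deg(H)=n+2g-2-\deg(G)$) gives dual dimension $n-\deg(G)-1+g$ and $d^\bot\ge n-\deg(H)=\deg(G)+2-2g$. I expect the main obstacle to be the case $h\neq 0$: unlike the elliptic setting, one cannot conclude $\mathcal{L}(\mathrm{l.m.d}(G,H)-D)=\{0\}$ from the degree of a non-principal divisor alone, and the key realization is to rewrite this space as $\mathcal{L}((\omega)-\mathrm{g.c.d}(G,H))$ so that the index-of-speciality identity $i(A)=l((\omega)-A)$ applies. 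This is precisely why the hypothesis is phrased in terms of non-speciality, and why that single condition settles both cases simultaneously.
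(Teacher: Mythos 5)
Your proposal is correct and follows essentially the same route as the paper: reduce to showing $\mathcal{C}(D,G)\cap\mathcal{C}(D,H)=\{0\}$ via the dichotomy on $h=f-g$, kill the case $h=0$ by $l(\mathrm{g.c.d}(G,H))=0$, kill the case $h\neq 0$ by identifying $\mathcal{L}(\mathrm{l.m.d}(G,H)-D)$ with $\mathcal{L}((\omega)-\mathrm{g.c.d}(G,H))$ and invoking non-speciality through Riemann--Roch, then conclude with Proposition~\ref{a.C} and Theorem~\ref{[n,k,d]}. Your handling of the second case (directly using $i(A)=l((\omega)-A)=0$) is a slightly more streamlined version of the paper's computation of the dimension difference $k=0$, but it is the same argument in substance.
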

\begin{proof}
We first prove that $\mathcal{C}(D,G) \cap \mathcal{C}(D,H)=\{0\}$. Suppose that there exist some $f\in \mathcal{L}(G)$ and
$g\in \mathcal{L}(H)$ such that $f(P_i)=g(P_i)$ for $i=\{1,2,\cdots ,n\}$. Consider the following two mutually exclusive cases on
$h:=f-g$

{\rm 1)} Case $h=0$. One has $f\in \mathcal{L}(G) \cap \mathcal{L}(H)$ and $f\in \mathcal{L}(\mathrm{g.c.d(G,H)})$. Since  $\mathrm{g.c.d}(G,H)$ is a non-special divisor,  $\mathcal{L}(\mathrm{g.c.d(G,H)})=\{0\}$ and $f=g=0$.

{\rm 2)} Case $h\neq 0$. One has $h \in \mathcal{L}(\mathrm{l.m.d}(G,H)-D)$ as $h(P_i)=0 (i=1,\cdots,n)$.
Let $k=l(\mathrm{l.m.d}(G,H)-D)-l(\mathrm{g.c.d}(G,H))$. From  $\mathrm{g.c.d}(G,H)= (\omega)-(\mathrm{l.m.d}(G,H)-D)$ and Theorem \ref{Riemann-Roch},
\begin{align*}
k=&l(\mathrm{l.m.d}(G,H)-D)-l((\omega)-(\mathrm{l.m.d}(G,H)-D))\\
=& deg(\mathrm{l.m.d}(G,H)-D)+1-g\\
=& deg((\omega)- \mathrm{g.c.d}(G,H))+1-g\\
=&0.
\end{align*}
From $\mathcal{L}(\mathrm{g.c.d}(G,H))=\{0\}$, $ \mathcal{L}(\mathrm{l.m.d}(G,H)-D)=\{0\}$.
Thus, $h=0$, which is a contradiction.

Hence, $\mathcal{C}(D,G) \cap \mathcal{C}(D,H)=\{0\}$. From Equation (\ref{dual}), $\mathcal{C}(D,G)^\bot =\mathbf e \cdot \mathcal{C}(D,H)$ with $\mathbf{e}=(\mathrm{Res}_{P_1}(\omega),\cdots, \mathrm{Res}_{P_n}(\omega))$ and Proposition \ref{a.C}, $\mathcal{ GC}(D,G,\mathbf a)$ and $\mathcal{ GC}(D,H,\mathbf a)$  are complementary dual codes with $\mathcal{ GC}(D,G,\mathbf a)^\bot =\mathcal{ GC}(D,H,\mathbf a)$. The dimensions and minimum distances follow from Theorem \ref{[n,k,d]}.
\end{proof}
\begin{remark}
In \cite{BL}, S. Ballet and D. Le Brigand proved that if $\# C(\mathbb F_q) \ge g+1$, there exists a non-special divisor such that $deg(G)=g-1$ and $Supp(G)\subset C(\mathbb F_q)$. Then, the existence of non-special divisors of degree $g-1$ is often clear since the involved algebraic curves  have many rational points. However, the problem lies in their effective determination. Moreover, actually almost all the divisors with degree $g-1$ are non-special (the terminology almost all means all but finitely many).

\end{remark}

\begin{example}
Let $q=2$ and $C$ be the   projective curve
of genus 1 defined by $Y^2Z+YZ^2=X^{3}$ over $\mathbb F_{4}=\{0, 1, \rho, \rho^2\}$.
Then, $C(\mathbb F_4)=\{\mathcal O, Q, P_1,\cdots, P_7\}=\{(0 : 1 : 0), (0 : 0 : 1), (0 : 1 : 1), (\rho : \rho : 1), (\rho : \rho^2 : 1), (\rho^2 : \rho : 1), (\rho^2 : \rho^2 : 1), (1 : \rho : 1), (1 : \rho^2 : 1) \}$. Let $D=\{P_1,\cdots, P_7\}$, $G=2\mathcal O+Q$ and $H=6\mathcal O-2Q$. Then, $\mathrm{g.c.d}(G,H)=2\mathcal O -2 Q$ is non-special, $(\frac{Z^4}{X^4+Z^3X} d\frac{X}{Z})=G+H-D$ and $\mathrm{Res}_{P_i}(\frac{Z^4}{X^4+Z^3X} d\frac{X}{Z})=1$ for $i\in \{1, \cdots, 7\}$. Note that  $(\frac{X}{Z})=Q-2\mathcal O+P_1$ and $(\frac{Y+Z}{X})=-Q-\mathcal O+2P_1$.
Thus, $\{1, \frac{X}{Z}, \frac{Y+Z}{X}\}$ is a basis of $\mathcal L(2\mathcal O +Q)$. Evaluate the functions in $\{1, \frac{X}{Z}, \frac{Y+Z}{X}\}$  at the places $\{P_1, \cdots, P_7\}$. One obtains the generator matrix of $\mathcal{C}(D,G)$
\begin{equation*}
\left(
\begin{array}{ccccccc}
   1   &  1  &   1   &  1  &   1  &   1   &  1\\
   0  & \rho &  \rho & \rho^2 & \rho^2  &   1  &   1\\
    0  & \rho  &   1   &  1 & \rho^2 &  \rho^2 &  \rho
\end{array}
\right)
\end{equation*}
Moreover, $\mathcal{C}(D,G)$ is a LCD code with parameters [7, 3, 4],  which is verified by MAGMA. This code is optimal.
\end{example}

\begin{theorem}\label{Res:const}
Let $C$ be a smooth  projective curve of genus $g$ over $\mathbb{F}_q$ and  $G$, $D=P_1+P_2+\cdots +P_n$ be two divisors over $C$, where $2g-2<deg(G)<n$.  Let $\omega$ be a Weil differential such that $(w)=G+H-D$ for some divisor $H$ and $Supp(G)\cap Supp(D)=Supp(H)\cap Supp(D)=\emptyset$.    Assume that

{\rm (1)} $\mathrm{Res}_{P_i}(\omega)=\mathrm{Res}_{P_j}(\omega)$ for $1\le i< j \le n$;

{\rm (2)} $\mathrm{g.c.d}(G,H)$ is a  divisor of degree $g-1$.

Then, $\mathcal{ C}(D,G)$ is a LCD code if and only if $\mathrm{g.c.d}(G,H)$ is non-special.
\end{theorem}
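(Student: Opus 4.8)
The plan is to reduce the LCD property to the triviality of one concrete intersection of algebraic geometry codes, and then to characterise that triviality through the speciality of $\mathrm{g.c.d}(G,H)$. First I would exploit assumption (1). Because $\mathrm{Supp}(G)\cap \mathrm{Supp}(D)=\mathrm{Supp}(H)\cap\mathrm{Supp}(D)=\emptyset$, each $P_i$ occurs in $(\omega)=G+H-D$ with coefficient $-1$, so every residue $\mathrm{Res}_{P_i}(\omega)$ is defined and nonzero; assumption (1) then says they share a common value $c\in\mathbb{F}_q^\star$. Feeding $\mathbf{e}=(c,\dots,c)$ into the identity $\mathcal{C}(D,G)^\bot=\mathbf{e}\cdot\mathcal{C}(D,H)$ recalled before Equation (\ref{dual}) gives $\mathcal{C}(D,G)^\bot=c\cdot\mathcal{C}(D,H)=\mathcal{C}(D,H)$, the last equality holding because scaling a linear code by a nonzero scalar leaves it unchanged. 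Hence $\mathcal{C}(D,G)$ is LCD if and only if $\mathcal{C}(D,G)\cap\mathcal{C}(D,H)=\{0\}$, and the whole theorem amounts to showing that this intersection is trivial exactly when $\mathrm{g.c.d}(G,H)$ is non-special.

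For the direction assuming $\mathrm{g.c.d}(G,H)$ is non-special, I would simply invoke the argument already carried out in the proof of Theorem \ref{gconstruction for curves}. Its two-case analysis, namely Case $h=0$ forcing $f\in\mathcal{L}(\mathrm{g.c.d}(G,H))=\{0\}$ and Case $h\neq 0$ forcing $h\in\mathcal{L}(\mathrm{l.m.d}(G,H)-D)=\{0\}$ via the Riemann--Roch computation $k=0$, uses only the non-speciality of $\mathrm{g.c.d}(G,H)$ of degree $g-1$ together with $\deg(G)<n$, and not any square-residue hypothesis. This yields $\mathcal{C}(D,G)\cap\mathcal{C}(D,H)=\{0\}$ at once.

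For the converse I would argue by contraposition: assume $\mathrm{g.c.d}(G,H)$ is special and exhibit a nonzero common codeword. Since $\deg(\mathrm{g.c.d}(G,H))=g-1$, the Riemann--Roch theorem (Theorem \ref{Riemann-Roch}) gives
\begin{align*}
l(\mathrm{g.c.d}(G,H))-i(\mathrm{g.c.d}(G,H))=(g-1)+1-g=0,
\end{align*}
so $l(\mathrm{g.c.d}(G,H))=i(\mathrm{g.c.d}(G,H))>0$ and $\mathcal{L}(\mathrm{g.c.d}(G,H))\neq\{0\}$. Choosing a nonzero $f$ there, the inclusions $\mathrm{g.c.d}(G,H)\preceq G$ and $\mathrm{g.c.d}(G,H)\preceq H$ give $f\in\mathcal{L}(G)\cap\mathcal{L}(H)$, whence $ev_D(f)\in\mathcal{C}(D,G)\cap\mathcal{C}(D,H)$. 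It remains to see $ev_D(f)\neq 0$: the kernel of $ev_D$ on $\mathcal{L}(G)$ is $\mathcal{L}(G-D)$, and $\deg(G-D)=\deg(G)-n<0$ forces $\mathcal{L}(G-D)=\{0\}$, so $ev_D$ is injective on $\mathcal{L}(G)$. Thus $ev_D(f)$ is a nonzero element of $\mathcal{C}(D,G)\cap\mathcal{C}(D,H)$, the code fails to be LCD, and the contrapositive is established.

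The step I expect to be the crux is the role of the degree condition $\deg(\mathrm{g.c.d}(G,H))=g-1$: it is precisely at this degree that Riemann--Roch equates $l$ and $i$, so that being special becomes equivalent to $\mathcal{L}(\mathrm{g.c.d}(G,H))\neq\{0\}$, which is exactly what lets me extract a genuine witness function in the converse. Away from degree $g-1$ this equivalence would break, and a special divisor could have no global sections. The only other point needing care is the injectivity of $ev_D$, which rests on $\deg(G)<n$; everything else is routine bookkeeping with the dual formula (\ref{dual}).
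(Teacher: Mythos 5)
Your proposal is correct and follows the same route as the paper's proof: identify $\mathcal{C}(D,G)^\bot=\mathcal{C}(D,H)$ from the equal residues, get the forward direction by reusing the intersection argument of Theorem \ref{gconstruction for curves}, and get the converse by extracting a nonzero $f\in\mathcal{L}(\mathrm{g.c.d}(G,H))$ when that divisor is special. You additionally justify two steps the paper leaves implicit (that speciality at degree $g-1$ forces $l(\mathrm{g.c.d}(G,H))>0$ via Riemann--Roch, and that $ev_D(f)\neq 0$ via injectivity of $ev_D$ on $\mathcal{L}(G)$ from $\deg(G)<n$), which is a welcome tightening rather than a deviation.
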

\begin{proof}
Let $\mathrm{Res}_{P_i}=c$ for $1\le i\le n$. From Equation (\ref{dual}), $\mathcal{C}(D,G)^\bot =(c,\cdots, c) \cdot \mathcal{C}(D,H)=\mathcal{C}(D,H)$.

Suppose that $\mathrm{g.c.d}(G,H)$ is non-special.
$\mathcal{ C}(D,G)$ is a LCD code from  similar arguments used in proving Theorem \ref{gconstruction for curves}.

Suppose that $\mathcal{ C}(D,G)$ is a LCD code. If $\mathrm{g.c.d}(G,H)$ is special. Then, $l(\mathrm{g.c.d}(G,H))>0$. Let $0\neq f\in \mathcal L(\mathrm{g.c.d}(G,H))=\mathcal L(G) \cap \mathcal L(H)$. Thus,
$(f(P_1), \cdots, f(P_n))\in \mathcal{ C}(D,G) \cap  \mathcal{ C}(D,H)$. Note that $(f(P_1), \cdots, f(P_n))\neq (0, \cdots, 0)$,
which
contradicts $\mathcal{ C}(D,G) \cap  \mathcal{ C}(D,H)=\{0\}$. This  completes the proof.
\end{proof}

\begin{corollary}
Let $C$ be the projective line over $\mathbb F_q$, $\mathcal O$ be the   point
at infinity, $P$ be the original point and $D=C(\mathbb F_q) \backslash \{\mathcal O, P\}$. Let $G=r \mathcal O+ r P$ with $0<r \le \frac{q-2}{2}$. Then, $\mathcal{C}(D, G)$ is a maximum distance separable (MDS) LCD code over $\mathbb F_q$ with parameters $[q-1, 2r+1, n-2r]$. Moreover, $\mathcal{C}(D, G)$ has generator matrix
\begin{equation*}
\left(
\begin{array}{ccccc}
   1 & 1 & 1 & \cdots & 1 \\
   1 & \rho^{1} & \rho^{2} & \cdots & \rho^{q-2}\\
   1 & \rho^{-1} & \rho^{-2} & \cdots & \rho^{-(q-2)}\\
   \vdots & \vdots & \vdots &\ddots & \vdots \\
   1 & \rho^{i\cdot 1} & \rho^{i\cdot 2} & \cdots & \rho^{i\cdot (q-2)}\\
  1 & \rho^{-i\cdot 1} & \rho^{-i\cdot 2} & \cdots & \rho^{-i\cdot (q-2)}\\
   \vdots & \vdots & \vdots &\ddots & \vdots \\
    1 & \rho^{r\cdot 1} & \rho^{r\cdot 2} & \cdots & \rho^{r\cdot (q-2)}\\
  1 & \rho^{-r\cdot 1} & \rho^{-r\cdot 2} & \cdots & \rho^{-r\cdot (q-2)}
   \end{array}
\right),
\end{equation*}
where $\rho \in \mathbb F_q$ is a primitive element.
\end{corollary}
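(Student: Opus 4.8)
The plan is to recognize $\mathcal{C}(D,G)$ as a code on the genus-$0$ curve $C=\mathbb{P}^1$ and to establish the LCD property through Theorem \ref{Res:const}. First I would fix the affine coordinate $x$, so that the $q+1$ rational points of $C$ are the $q$ affine points indexed by $\mathbb{F}_q$ together with $\mathcal{O}$; here $P$ is the point $x=0$, and $D$ is the sum of the $n=q-1$ points $P_{\rho^j}$ with $x=\rho^j$ ($0\le j\le q-2$) ranging over $\mathbb{F}_q^\star$. Since $g=0$ we have $2g-2=-2<\deg(G)=2r<q-1=n$, using $0<r\le\frac{q-2}{2}$, so the degree hypotheses of Theorem \ref{Res:const} hold and, by Theorem \ref{[n,k,d]}, the code has dimension $k=\deg(G)-g+1=2r+1$ and distance $d\ge n-2r$.

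The key step is to exhibit a Weil differential $\omega$ whose residues at the $P_{\rho^j}$ are all equal, so that the residue vector $\mathbf{e}$ in Equation (\ref{dual}) is constant and the dual scaling is trivial. I would take $\omega=\frac{x^{q-2}}{x^{q-1}-1}\,dx$. Since $\prod_{\alpha\in\mathbb{F}_q^\star}(x-\alpha)=x^{q-1}-1$, this $\omega$ has a simple pole at each $P_{\rho^j}$ with residue $\frac{\alpha^{q-2}}{(q-1)\alpha^{q-2}}=\frac{1}{q-1}=-1$, independent of $\alpha$. A direct divisor computation then gives $(\omega)=-\mathcal{O}+(q-2)P-D$, using $(x)=P-\mathcal{O}$, $(dx)=-2\mathcal{O}$, and $(x^{q-1}-1)=D-(q-1)\mathcal{O}$; consequently $H=(\omega)-G+D=-(r+1)\mathcal{O}+(q-2-r)P$.

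Next I would compute the greatest common divisor coefficientwise: at $\mathcal{O}$ the minimum of $r$ and $-(r+1)$ is $-(r+1)$, while at $P$ the hypothesis $r\le\frac{q-2}{2}$ gives $q-2-r\ge r$, so the minimum is $r$. Hence $\mathrm{g.c.d}(G,H)=-(r+1)\mathcal{O}+rP$ has degree $-1=g-1$. On $\mathbb{P}^1$ any divisor of degree $-1$ has $l=0$ and is therefore non-special, so Theorem \ref{Res:const} applies and $\mathcal{C}(D,G)$ is LCD with $\mathcal{C}(D,G)^\bot=\mathcal{C}(D,H)$. Combining $d\ge n-2r$ with the Singleton bound $d\le n-k+1=n-2r$ forces $d=n-2r$, so the code is MDS with the stated parameters $[q-1,2r+1,n-2r]$.

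Finally, for the generator matrix I would use the explicit description $\mathcal{L}(G)=\mathcal{L}(r\mathcal{O}+rP)=\{\sum_{j=-r}^{r}c_j x^j : c_j\in\mathbb{F}_q\}$, the Laurent polynomials with pole order at most $r$ at both $\mathcal{O}$ and $P$; the monomials $\{x^0,x^{\pm1},\dots,x^{\pm r}\}$ form a basis of size $2r+1$. Evaluating $x^i$ at $x=\rho^j$ gives the entry $\rho^{ij}$, which reproduces the displayed matrix row by row (the row $x^0$ giving the all-ones row, $x^{\pm i}$ giving the rows with exponents $\pm i\cdot(q-2)$). I expect the only delicate points to be the construction and divisor bookkeeping of $\omega$ with constant residues, and the precise use of $r\le\frac{q-2}{2}$ to pin the coefficient of $\mathrm{g.c.d}(G,H)$ at $P$ and thereby force $\deg(\mathrm{g.c.d}(G,H))=g-1$; the remaining steps are routine Riemann--Roch on the projective line.
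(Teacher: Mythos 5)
Your proposal is correct and follows essentially the same route as the paper: exhibit a differential with constant residue $-1$ on $D$, compute $H$, observe that $\mathrm{g.c.d}(G,H)$ has degree $-1=g-1$ and is therefore non-special, and invoke Theorem \ref{Res:const} together with Theorem \ref{[n,k,d]}, the Singleton bound, and the basis $\{x^i:-r\le i\le r\}$. The only (immaterial) difference is your choice $\omega=\frac{x^{q-2}}{x^{q-1}-1}\,dx$ in place of the paper's $\frac{1}{x^q-x}\,dx$; the two differ by the function $x^{q-1}$, so your $H=-(r+1)\mathcal O+(q-2-r)P$ replaces the paper's $(q-r-2)\mathcal O-(r+1)P$, but both yield a non-special $\mathrm{g.c.d}$ of degree $g-1$.
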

\begin{proof}
Let $H=(q-r-2)\mathcal O -(r+1)P$. Then, $\mathrm{g.c.d}=r \mathcal O -(r+1) P$ is non-special and $(\omega)=G+H-D$ with $\omega=\frac{1}{x^q-x} dx$.
Note that $\mathrm{Res}_{Q}(\omega)=-1$ for any $Q\in Supp(D)$. Form Theorem \ref{Res:const},  $\mathcal{C}(D, G)$ is a LCD code. The parameters
of $\mathcal{C}(D, G)$ follows from Theorem \ref{[n,k,d]} and the Singleton bound. Observe that $\{x^i: -r\le i \le r\}$ is a base of $\mathcal L(r \mathcal O+ r P)$. This completes the proof.
\end{proof}

\subsection{LCD codes from hyperelliptic curves}
Let $q=2^m$ and $C$ be the curve over $\mathbb F_{q^2}$ defined as
\begin{align*}
y^2+y=x^{q+1}.
\end{align*}
This curve has genus $g=\frac{q}{2}$. For any $\alpha\in \mathbb F_{q^2}$, there exactly exist two rational points $P_{\alpha}^+, P_{\alpha}^-$ with $x$-component $\alpha$.
Let $\mathcal O$ be the point at infinity. Then, the set $C(\mathbb F_{q^2})$ of all rational points of $C$ equal $\{P_{\alpha}^+: \alpha \in \mathbb F_{q^2}\}\cup \{P_{\alpha}^-: \alpha \in \mathbb F_{q^2}\} \cup \{\mathcal O\}$. Thus, $C$ has exactly $1+2q^2=1+q^2+2g\sqrt{q^2}$ rational points, which attains the well-known Hasse-Weil bound. Let $\omega=\frac{1}{x^{q^2}+x} dx$, then
\begin{align}\label{C}
(\omega)=2(q^2-1+\frac{q}{2}) \cdot \mathcal O- \sum_{\alpha\in \mathbb F_{q^2}}(P_{\alpha}^+ + P_{\alpha}^-) \text{ and } \mathrm{Res}_{P_{\alpha}^+}(\omega)=\mathrm{Res}_{P_{\alpha}^-}(\omega)=1,
\end{align}
for any $\alpha\in \mathbb F_{q^2}$.

\begin{theorem}\label{(r+g)O+rQ}
Let $q\ge 4$, $P$ be an affine point of $C$, $D=C(\mathbb F_{q^2}) \backslash \{\mathcal O,P\}$ and $G=(r+\frac{q}{2})\mathcal O+r P$, such that $\frac{q}{4}\le r \le q^2-\frac{q}{4}-1$ and $(r+\frac{q}{2})\mathcal O -(r+1)P$ is a non-special divisor. Then, $\mathcal{ C}(D,G)$ is a LCD code with dimension $2r+1$ and minimum distance $d\ge 2q^2-\frac{q}{2}-2r-1$,  and  the dual code $\mathcal{ C}(D,H)$ of $\mathcal{ C}(D,G)$  is a LCD code with dimension $2(q^2-r-1)$ and minimum distance $d^\bot \ge 2r-\frac{q}{2}+2$, where $H=(2q^2+\frac{q}{2}-r-2)\mathcal{O}-(r+1) P$.
\end{theorem}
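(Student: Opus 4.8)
The plan is to exhibit $\mathcal{C}(D,G)$ and $\mathcal{C}(D,H)$ as instances of Theorem \ref{gconstruction for curves} for the specific differential $\omega = \frac{1}{x^{q^2}+x}\,dx$ recorded in Equation (\ref{C}). The decisive feature of this curve is that every residue $\mathrm{Res}_{P_\alpha^{\pm}}(\omega)$ equals $1$; since $1=1^2$, hypothesis (1) of Theorem \ref{gconstruction for curves} is met by the all-ones vector $\mathbf{a}=(1,\ldots,1)$, and then $\mathcal{GC}(D,G,\mathbf{a})=\mathcal{C}(D,G)$ and $\mathcal{GC}(D,H,\mathbf{a})=\mathcal{C}(D,H)$. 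Thus the whole argument reduces to checking the divisor-theoretic hypothesis (2) together with the range constraint $2g-2<\deg(G)<n$, after which the conclusion is a direct substitution into the parameter formulas of that theorem. Alternatively one may invoke Theorem \ref{Res:const} for the LCD property (the residues being all equal) together with Theorem \ref{[n,k,d]} for the parameters.

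First I would fix the numerical data of $C$: it has genus $g=\frac{q}{2}$ and $\#C(\mathbb{F}_{q^2})=2q^2+1$, so $n=\#D=2q^2-1$. Writing the affine part of $C(\mathbb{F}_{q^2})$ as $\sum_{\alpha}(P_\alpha^{+}+P_\alpha^{-})$, the point $P$ is one of these, so $D=\sum_{\alpha}(P_\alpha^{+}+P_\alpha^{-})-P$. Adding $D$ to the expression for $(\omega)$ in Equation (\ref{C}) collapses the sum over $\alpha$ and leaves $(\omega)+D=(2q^2+q-2)\mathcal{O}-P$; a separate expansion of $G+H$ gives the same divisor, confirming $(\omega)=G+H-D$. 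Since $\mathrm{Supp}(G)=\mathrm{Supp}(H)=\{\mathcal{O},P\}$ while $D$ omits both $\mathcal{O}$ and $P$, the support-disjointness conditions hold automatically.

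Next I would compute $\mathrm{g.c.d}(G,H)$ coordinatewise. At $P$ the coefficient is $\min(r,\,-(r+1))=-(r+1)$, and at $\mathcal{O}$ it is $\min\!\big(r+\tfrac{q}{2},\,2q^2+\tfrac{q}{2}-r-2\big)$; here the hypothesis $r\le q^2-\tfrac{q}{4}-1<q^2-1$ forces the first term to be the smaller, so $\mathrm{g.c.d}(G,H)=(r+\tfrac{q}{2})\mathcal{O}-(r+1)P$. This is exactly the divisor assumed non-special in the statement, and its degree is $(r+\tfrac{q}{2})-(r+1)=\tfrac{q}{2}-1=g-1$, so hypothesis (2) holds. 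The range condition is where the two bounds on $r$ earn their keep: $\deg(G)=2r+\tfrac{q}{2}$, and $r\ge\tfrac{q}{4}$ gives $\deg(G)>q-2=2g-2$, while $r\le q^2-\tfrac{q}{4}-1$ gives $\deg(G)\le 2q^2-2<2q^2-1=n$.

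With all hypotheses verified, Theorem \ref{gconstruction for curves} applies and I would read off the parameters: $\dim\mathcal{C}(D,G)=\deg(G)+1-g=2r+1$ with $d\ge n-\deg(G)=2q^2-\tfrac{q}{2}-2r-1$, and dually $\dim\mathcal{C}(D,H)=n-\deg(G)-1+g=2(q^2-r-1)$ with $d^\bot\ge\deg(G)+2-2g=2r-\tfrac{q}{2}+2$. There is no genuinely hard step: the non-specialty of the critical divisor, which is the only delicate point in results of this type, is taken as a hypothesis, and the assumption $q\ge 4$ (equivalently $m\ge 2$) merely guarantees that $g=\tfrac{q}{2}$ and $\tfrac{q}{4}$ are integers so that the divisors and bounds make sense. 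The main thing to be careful about is the coordinatewise minimum at $\mathcal{O}$ in the gcd computation, since the claimed value $(r+\tfrac{q}{2})\mathcal{O}-(r+1)P$ relies precisely on the upper bound $r\le q^2-\tfrac{q}{4}-1$.
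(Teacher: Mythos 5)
Your proposal is correct and follows essentially the same route as the paper's proof: apply Theorem \ref{gconstruction for curves} with $\omega=\frac{1}{x^{q^2}+x}\,dx$ from Equation (\ref{C}), identify $\mathrm{g.c.d}(G,H)=(r+\frac{q}{2})\mathcal O-(r+1)P$ as the hypothesized non-special divisor of degree $g-1$, and read off the parameters. You simply spell out the verifications (residues all equal to $1$, support disjointness, the degree range forced by the bounds on $r$) that the paper leaves implicit.
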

\begin{proof}
From Equation (\ref{C}), $(\omega)=G+H-D$, where $\omega=\frac{1}{x^{q^2}+x}dx$. Observe that $\mathrm{g.c.d}(G,H)=(r+\frac{q}{2})\mathcal O-(r+1)P$. This theorem follows from  $(r+\frac{q}{2})\mathcal O -(r+1)P$ being a non-special divisor and Theorem \ref{gconstruction for curves}.
\end{proof}

\begin{example}
Let $q=2^2$ and $C$ be the genus $2$ hyperelliptic curve defined by $y^2+y=x^{q+1}$ over $\mathbb F_{q^2}$. Let $P=(0,0)$
 and $D=C(\mathbb F_{q^2}) \backslash \{\mathcal O, P\}$. Then $9\mathcal O-8P$ is non-special and $\# D=31$.
Let  $G=9\mathcal O+7P$ and $H=25\mathcal O-8P$. Then, $\mathcal{ C}(D,G)$ in Theorem \ref{(r+g)O+rQ} is a LCD code with parameters [31, 15], and  the dual code $\mathcal{ C}(D,H)$ in Theorem \ref{(r+g)O+rQ}  of $\mathcal{ C}(D,G)$  is a LCD code with parameters [31, 16], which is verified by MAGMA.
\end{example}

\begin{theorem}\label{g-1 non-special}
Let $q\ge 4$, $\sum_{i=1}^t n_i =g$ with $n_i> 0$ and $\sum_{i=1}^{t} r_i\le \frac{1}{4}(2q^2-\frac{3}{2}q-4t-4)$ with $r_i\ge 0$. Let $\{\alpha_1,\cdots, \alpha_t\}$ be a subset of $\mathbb F_{q^2}$ with cardinality $t$,  $D=C(\mathbb F_{q^2}) \backslash \{\mathcal O, P_{\alpha_1}^+, P_{\alpha_1}^-,
 \cdots,  $
$ P_{\alpha_t}^+, P_{\alpha_t}^-  \}$ and $G=(2(t+\sum_{i=1}^t r_i ) +q-1)\cdot \mathcal O+\sum_{i=1}^t r_i \cdot (P_{\alpha_i}^+ +P_{\alpha_i}^-)+\sum_{i=1}^t n_i \cdot P_{\alpha_i}^+ $. Then, $\mathcal{ C}(D,G)$ is a LCD code with dimension $4\sum_{i=1}^t r_i +2t+q$ and minimum distance $d\ge 2q^2-4\sum_{i=1}^t r_i -4t- \frac{3}{2}q   +1$,  and  the dual code $\mathcal{ C}(D,H)$ of $\mathcal{ C}(D,G)$  is a LCD code with dimension $2q^2-4\sum_{i=1}^t r_i -4t- q$ and minimum distance $d^\bot\ge  4\sum_{i=1}^t r_i +2t+\frac{1}{2} q +1$, where $H=(2q^2-2(t+\sum_{i=1}^t r_i )-1)\cdot \mathcal O-\sum_{i=1}^t (r_i +n_i+1) \cdot (P_{\alpha_i}^+ +P_{\alpha_i}^-)+\sum_{i=1}^t n_i \cdot P_{\alpha_i}^- $.
\end{theorem}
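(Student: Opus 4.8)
The plan is to bring $\mathcal{C}(D,G)$ under Theorem~\ref{Res:const}, whose decisive requirement is that $\mathrm{g.c.d}(G,H)$ be a \emph{non-special} divisor of degree $g-1$; proving this non-speciality is the heart of the matter, everything else being bookkeeping.

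First I would fix the Weil differential $\omega=\frac{1}{x^{q^2}+x}\,dx$ and invoke Equation~(\ref{C}) to verify $(\omega)=G+H-D$: the coefficient of $\mathcal{O}$ in $G+H$ is $\bigl(2(t+\sum_i r_i)+q-1\bigr)+\bigl(2q^2-2(t+\sum_i r_i)-1\bigr)=2\bigl(q^2-1+\tfrac{q}{2}\bigr)$; at each place $P_{\alpha_i}^{\pm}$ the coefficients of $G$ and $H$ add to $-1$; and at the remaining affine places $G+H$ vanishes while the $-D$ term supplies the $-1$ prescribed by Equation~(\ref{C}). Since $\mathrm{Res}_{P}(\omega)=1$ for every $P\in\mathrm{Supp}(D)$, hypothesis~(1) of Theorem~\ref{Res:const} holds automatically. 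A direct computation of the greatest common divisor gives
\[
A:=\mathrm{g.c.d}(G,H)=\Bigl(2(t+\sum_i r_i)+q-1\Bigr)\mathcal{O}-\sum_i (r_i+n_i+1)P_{\alpha_i}^{+}-\sum_i (r_i+1)P_{\alpha_i}^{-},
\]
of degree $q-1-\sum_i n_i=q-1-g=g-1$ (using $g=\tfrac{q}{2}$ and $\sum_i n_i=g$), so hypothesis~(2) holds as well.

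The crux is to show $l(A)=0$. Here I would use the degree-two structure of the function field. Set $M:=2(t+\sum_i r_i)+q-1$. Any $\phi\in\mathcal{L}(A)$ has its only pole at $\mathcal{O}$, hence lies in the affine coordinate ring and is written uniquely as $\phi=f(x)+y\,g(x)$ with $f,g\in\mathbb{F}_{q^2}[x]$. As $v_{\mathcal{O}}(x)=-2$ and $v_{\mathcal{O}}(y)=-(q+1)$ have opposite parity, the bound $v_{\mathcal{O}}(\phi)\ge -M$ splits into $\deg f\le t+\sum_i r_i+\tfrac{q}{2}-1$ and $\deg g\le t+\sum_i r_i-1$. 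The key device is the involution $\iota\colon(x,y)\mapsto(x,y+1)$, which fixes $x$, swaps $P_{\alpha_i}^{+}\leftrightarrow P_{\alpha_i}^{-}$, and satisfies $\phi+\iota\phi=g(x)$ in characteristic $2$. Since $x-\alpha_i$ is a uniformizer at each affine place, the prescribed vanishing $v_{P_{\alpha_i}^{+}}(\phi)\ge r_i+n_i+1$ and $v_{P_{\alpha_i}^{-}}(\phi)\ge r_i+1$ gives $v_{P_{\alpha_i}^{+}}(\iota\phi)=v_{P_{\alpha_i}^{-}}(\phi)\ge r_i+1$, whence $\mathrm{mult}_{\alpha_i}(g)=v_{P_{\alpha_i}^{+}}(g(x))\ge r_i+1$ for every $i$; summing, a nonzero $g$ would need $\deg g\ge\sum_i(r_i+1)=t+\sum_i r_i$, contradicting the degree bound, so $g=0$. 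Then $\phi=f(x)$ is $\iota$-invariant, and the stronger condition $v_{P_{\alpha_i}^{+}}(\phi)\ge r_i+n_i+1$ forces $\mathrm{mult}_{\alpha_i}(f)\ge r_i+n_i+1$; summing gives $\deg f\ge\sum_i(r_i+n_i+1)=t+\sum_i r_i+g$, again contradicting the bound unless $f=0$. Hence $\phi=0$, so $l(A)=0$ and $A$ is non-special.

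With non-speciality established, Theorem~\ref{Res:const} shows $\mathcal{C}(D,G)$ is LCD with dual $\mathcal{C}(D,H)$ (which is LCD as well, the property being preserved under dualization), and Theorem~\ref{[n,k,d]} together with $n=\#D=2q^2-2t$ delivers the parameters: $\mathcal{C}(D,G)$ has dimension $\deg G+1-g=4\sum_i r_i+2t+q$ and its dual $\mathcal{C}(D,H)$ has dimension $2q^2-4\sum_i r_i-4t-q$, with distance bounds $d\ge n-\deg G$ and $d^{\bot}\ge n-\deg H$. The hypothesis $\sum_i r_i\le\frac{1}{4}(2q^2-\tfrac{3}{2}q-4t-4)$ is genuinely used only to ensure $\deg G<n$ and that $M$ is the smaller of the two $\mathcal{O}$-coefficients, so that $A$ has the displayed form and degree $g-1$. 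I expect the non-speciality step to be the one real obstacle, which the trace identity $\phi+\iota\phi=g(x)$ dispatches cleanly, the remaining arithmetic being routine.
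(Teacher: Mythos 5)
Your proof is correct, and it reaches the decisive point --- non-speciality of $\mathrm{g.c.d}(G,H)$ --- by a genuinely different route than the paper. The paper's proof is much shorter at that step: it uses the linear equivalence $P_{\alpha_i}^{+}+P_{\alpha_i}^{-}\sim 2\mathcal{O}$ (coming from the divisor of $x+\alpha_i$) to replace $\mathrm{g.c.d}(G,H)$ by the equivalent divisor $\sum_{i=1}^{t} n_i P_{\alpha_i}^{-}-\mathcal{O}$, and then simply quotes the standard fact that a reduced divisor $\sum n_i P_{\alpha_i}^{-}$ of degree $g$ on a hyperelliptic curve has $l=1$, whence $l(\sum n_i P_{\alpha_i}^{-}-\mathcal{O})=0$; it then concludes via Theorem~\ref{gconstruction for curves} (with $\mathbf a=(1,\dots,1)$, since all residues equal $1$), whereas you conclude via Theorem~\ref{Res:const}. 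Your replacement of the ``reduced divisor'' citation by the explicit decomposition $\phi=f(x)+y\,g(x)$, the parity split of $v_{\mathcal O}$, and the trace identity $\phi+\iota\phi=g(x)$ is a self-contained and correct proof of the same fact (in effect you re-derive why such a reduced divisor class contains no other effective representative), at the cost of about a paragraph of extra work; the paper's argument is slicker but leans on an unproved, though standard, fact from hyperelliptic Jacobian arithmetic. Your verification of $(\omega)=G+H-D$, the identification of $\mathrm{g.c.d}(G,H)$ and its degree $g-1$, the role of the hypothesis on $\sum_i r_i$, and the parameter bookkeeping via Theorem~\ref{[n,k,d]} all match the paper.
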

\begin{proof}
From Equation (\ref{C}), $(\omega)=G+H-D$, where $\omega=\frac{1}{x^{q^2}+x}dx$. Note that $\mathrm{g.c.d}(G,H)=(2(t+ \sum_{i=1}^{t} r_i) +q-1)\cdot \mathcal O   -\sum_{i=1}^{t}  (r_i+n_i+1)(P_{\alpha_i}^+  +P_{\alpha_i}^-)  +\sum_{i=1}^{t}  n_i  P_{\alpha_i}^-$.
From $ P_{\alpha_{i}}^+  +  P_{\alpha_{i}}^-  \sim 2\mathcal O$, we have $\mathrm{g.c.d}(G,H)= \sum_{i=1}^{t} n_i P_{\alpha_i}^-   -\mathcal O$. Since $\sum_{i=1}^{t} n_i P_{\alpha_i}^-$ is a reduced divisor, $l(\sum_{i=1}^{t} n_i P_{\alpha_i}^-)=1$. Thus $l(\mathrm{g.c.d}(G,H))=l(\sum_{i=1}^{t} n_i P_{\alpha_i}^-   -\mathcal O)=0$.
This theorem follows from   Theorem \ref{gconstruction for curves}.
\end{proof}

\begin{remark}
From $(x+\alpha_i)=P_{\alpha_i}^+  +P_{\alpha_i}^- - 2\mathcal O$,
$G=(\prod_{i=1}^t (x+\alpha_i)^{r_i})+ (4\sum_{i=1}^t r_i   +2t   +q-1)\cdot \mathcal O  +\sum_{i=1}^t  n_i  P_{\alpha_i}^+$.
Thus, $\mathcal L(G) =\frac{1}{\prod_{i=1}^t (x+\alpha_i)^{r_i}}   \mathcal L((4\sum_{i=1}^t r_i   +2t   +q-1)\cdot \mathcal O  +\sum_{i=1}^t  n_i  P_{\alpha_i}^+)$.
From the  similar discussion as above, one gets $\mathcal L(H)=\prod_{i=1}^t (x+\alpha_i)^{r_i+n_i+1}   \mathcal L((2q^2-q-\sum_{i=1}^t r_i-4t-1)\cdot \mathcal O    +\sum_{i=1}^t  n_i  P_{\alpha_i}^-)$. Let $D=\{P_{\beta_1}^+,P_{\beta_1}^-,\cdots  ,P_{\beta_{q^2-t}}^+,P_{\beta_{q^2-t}}^-\}$. Then,  $\mathcal{ C}(D,G)= \mathcal{ GC}(D,(4\sum_{i=1}^t r_i   +2t   +q-1)\cdot \mathcal O  +\sum_{i=1}^t  n_i  P_{\alpha_i}^+, \mathbf a)$ and $\mathcal{ C}(D,H)= \mathcal{ GC}(D,(2q^2-q-\sum_{i=1}^t r_i-4t-1)\cdot \mathcal O    +\sum_{i=1}^t  n_i  P_{\alpha_i}^-, \mathbf b)$, where $\mathbf a=(a_1,a_1,\cdots, a_{q^2-t},a_{q^2-t})$ and $\mathbf b=(b_1,b_1,\cdots, b_{q^2-t},b_{q^2-t})$,  $a_j=\frac{1}{\prod_{i=1}^t (\beta_j+\alpha_i)^{r_i}   }$,  and $b_j=\prod_{i=1}^t (\beta_j+\alpha_i)^{r_i+n_i+1} $.
\end{remark}
\begin{example}
Let $q=2^3$ and $C$ be the genus $4$ hyperelliptic curve defined by $y^2+y=x^{q+1}$ over $\mathbb F_{q^2}$. Let $P^+=(0,0)$
, $P^-=(0,1)$ and $D=C(\mathbb F_{q^2}) \backslash \{\mathcal O, P^+,P^-\}$. Then $\# D=126$.
Let  $G=19\mathcal O+5(P^+ +P^-)+ 4P^+$ and $H=115\mathcal O-10(P^+ + P^-) +P^-$. Then, $\mathcal{ C}(D,G)$ in Theorem \ref{g-1 non-special} is a LCD code with parameters [126, 30], and  the dual code $\mathcal{ C}(D,H)$ in Theorem \ref{g-1 non-special}  of $\mathcal{ C}(D,G)$ is a LCD code with parameters [126, 96], which is verified by MAGMA.
\end{example}

\begin{remark}
All constructions presented in Theorem \ref{(r+g)O+rQ} and Theorem \ref{g-1 non-special} can be directly generalized to any hyperelliptic curves.
\end{remark}

\subsection{LCD codes from Hermitian curves}
 Let $q$ be a power of any prime and  $C_{as}$ be the Hermitian curve over $\mathbb F_{q^2}$ defined by
 $$y^q+y=x^{q+1}.$$
Then $C_{as}$ is also an Artin-Schreier curve. The curve $C_{as}$ has genus $g=\frac{1}{2}q(q-1)$, and for every $\alpha\in \mathbb F_{q^2}$ the element $x-\alpha$ has $q$ zeros of degree one in $C_{as}$. Except the   point $\mathcal O$ at infinity, all rational points of $C_{as}$ are obtained in this way. One easily checks that the Hasse-Weil bound is attained. Let $\omega=\frac{1}{x^{q^2}-x} dx$, then
 \[
 (\omega)=(n+2g-2)\mathcal O -D,
 \]
 where $D=C_{as}(\mathbb F_{q^2})\backslash \{\mathcal O\}$ and $n=\# D=q^3$. Then, $\mathrm{Res}_{P}(\omega)=-1$.
We refer to \cite{Sti} for more details about Hermitian curves.
\begin{theorem}\label{AS}
Let $C_{ad}, g, \mathcal O, D$ and $\omega$ be defined as before.
Let $G=(r \cdot deg(P)+g-1)\cdot \mathcal O+ r\cdot P$,  where $P$ is a place of $C_{as}$ with degree more than $1$ and $r$ is a positive integer whit $r\cdot deg(P)\le \frac{n}{2}$.
Then, $\mathcal C(D,G)$ is a LCD codes if and only if $(r \cdot deg(P)+g-1)\cdot \mathcal O- r\cdot P$ is non-special.
\end{theorem}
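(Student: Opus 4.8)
The plan is to read this statement as a direct instance of Theorem \ref{Res:const} applied to the Hermitian curve, so that the proof reduces to exhibiting the companion divisor $H$ and checking the hypotheses of that theorem. First I would set $H := (n + 2g - 2)\mathcal O - G$, that is $H = (n + g - 1 - r\cdot\deg(P))\mathcal O - r\cdot P$. With this choice $G + H - D = (n + 2g - 2)\mathcal O - D = (\omega)$, using the divisor formula for $\omega = \frac{1}{x^{q^2}-x}dx$ recorded just above the theorem. Because $\deg(P) > 1$, the place $P$ cannot be one of the degree-one points making up $D = C_{as}(\mathbb F_{q^2}) \setminus \{\mathcal O\}$, and $\mathcal O \notin \mathrm{Supp}(D)$ by construction; hence $\mathrm{Supp}(G) \cap \mathrm{Supp}(D) = \mathrm{Supp}(H) \cap \mathrm{Supp}(D) = \emptyset$, and the standing degree inequalities $2g - 2 < \deg(G) < n$ are verified routinely from $r\cdot\deg(P) \le \frac{n}{2}$.

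Next I would verify the two numbered hypotheses of Theorem \ref{Res:const}. Hypothesis (1) is immediate: every residue $\mathrm{Res}_{P}(\omega)$ equals $-1$, so all the residues at points of $D$ coincide. For hypothesis (2) I would compute $\mathrm{g.c.d}(G, H)$ place by place. Away from $\{\mathcal O, P\}$ both divisors vanish; at $P$ the two coefficients are $r$ and $-r$, contributing $-r$; at $\mathcal O$ they are $r\cdot\deg(P) + g - 1$ and $n + g - 1 - r\cdot\deg(P)$, and the hypothesis $r\cdot\deg(P) \le \frac{n}{2}$ is exactly the inequality $r\cdot\deg(P) + g - 1 \le n + g - 1 - r\cdot\deg(P)$, so the minimum at $\mathcal O$ is $r\cdot\deg(P) + g - 1$. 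Therefore $\mathrm{g.c.d}(G, H) = (r\cdot\deg(P) + g - 1)\mathcal O - r\cdot P$, a divisor of degree $(r\cdot\deg(P) + g - 1) - r\cdot\deg(P) = g - 1$ since $\mathcal O$ is rational. This matches the divisor named in the statement and discharges hypothesis (2).

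Having checked all hypotheses, I would invoke Theorem \ref{Res:const} verbatim: it asserts that $\mathcal C(D, G)$ is a LCD code if and only if $\mathrm{g.c.d}(G, H)$ is non-special, and since $\mathrm{g.c.d}(G, H) = (r\cdot\deg(P) + g - 1)\mathcal O - r\cdot P$ this is precisely the claimed equivalence.

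The only step carrying genuine content is the gcd computation, and within it the observation that the constraint $r\cdot\deg(P) \le \frac{n}{2}$ is exactly what forces the $\mathcal O$-coefficient of $G$ (not of $H$) to be the minimum; this is what pins the greatest common divisor to degree $g - 1$, so that the degree-$(g-1)$ requirement of Theorem \ref{Res:const}, and hence its clean if-and-only-if characterization, become available. Everything else is bookkeeping inherited from the Hermitian setup.
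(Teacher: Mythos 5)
Your proposal is correct and follows essentially the same route as the paper's own proof: the same companion divisor $H=(n-r\cdot\deg(P)+g-1)\mathcal O-r\cdot P$, the same computation $\mathrm{g.c.d}(G,H)=(r\cdot\deg(P)+g-1)\mathcal O-r\cdot P$ of degree $g-1$ (with $r\cdot\deg(P)\le \frac n2$ pinning the minimum at $\mathcal O$), the constant residues $-1$, and then Theorem \ref{Res:const}; you merely spell out the bookkeeping the paper leaves implicit. The one point you wave through as ``routine'' --- the standing inequality $2g-2<\deg(G)$, which can actually fail for small $r\cdot\deg(P)$ (e.g.\ $q=4$, $\deg(P)=2$, $r=1$ gives $\deg(G)=9<10=2g-2$) --- is equally unaddressed in the paper's proof, so it is not a defect of your argument relative to theirs.
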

\begin{proof}
Let $H=(n-r\cdot deg(P)+g-1)\cdot \mathcal O -r\cdot P$. Then $\mathrm{g.c.d}(G, H)=(r \cdot deg(P)+g-1)\cdot \mathcal O- r\cdot P$, $(\omega)=G+H-D$ and $\mathrm{Res}_{Q}(\omega)=-1$ for any $Q\in Supp(D)$. From Theorem \ref{Res:const}, this theorem follows.
\end{proof}

\begin{example}
Let $q=3$ and $C$ be the genus $3$ Hermitian curve defined by $y^q+y=x^{q+1}$ over $\mathbb F_{q^2}$. Let $P$ be the degree $3$ place at $(\beta, \rho^2 \beta^2+\beta-1)$
, where $\rho \in \mathbb{F}_9, \beta\in \mathbb{F}_{9^3}$, $\rho^2-\rho-1=0$, and $\beta^3+\rho \beta^2-\beta+\rho^2=0$. Let $D=C(\mathbb F_{q^2}) \backslash \{\mathcal O\}$ and $G=8\mathcal O+2P$.
Then, $\# D=27$ and $8\mathcal O-2P$ is non-special. $\mathcal{ C}(D,G)$ in Theorem \ref{AS} is a LCD code with parameters [27, 12], which is verified by  MAGMA.
\end{example}

\section{conclusion}
This paper is devoted to the construction  of particular AG complementary dual (LCD) codes which can be resistant against side-channel attacks (SCA).
We firstly provide a construction scheme  for obtaining LCD codes from elliptic curves and present some explicit LCD codes from elliptic curves, which contain some infinite class of optimal codes with parameters meeting Griesmer bound on linear codes.
All codes constructed from elliptic curve are MDS or almost MDS. We also introduce a  construction mechanism for obtaining LCD codes from any algebraic
curve and derive some explicit LCD codes from hyperelliptic curves and Hermitian curves. In a future work, we will study the resistance of algebraic geometry LCD codes to SCA.


\ifCLASSOPTIONcaptionsoff
  \newpage
\fi

\end{document}